\newlist{todolist}{itemize}{2}
\setlist[todolist]{label=$\square$}
\definecolor{Gray}{gray}{0.65}
\definecolor{LightGray}{gray}{0.9}
\newtheorem{theorem}{Theorem}[section]
\newtheorem{definition}{Definition}[section]
\titleformat*{\section}{\large\bfseries}
\titleformat*{\subsection}{\normalsize\bfseries}
\newcommand{\oursystem}{DPolicy\xspace}
\begin{document}

\title{
\large \bf \oursystem: Managing Privacy Risks Across Multiple Releases with Differential Privacy}

\author{{\rm Nicolas K\"uchler\textsuperscript{1}, Alexander Viand\textsuperscript{2}, Hidde Lycklama\textsuperscript{1}, Anwar Hithnawi\textsuperscript{3}}  \\
\\
{\textsuperscript{1}\textit{ETH Zurich} \ \textsuperscript{2}\textit{Intel Labs} \ \textsuperscript{3}\textit{University of Toronto}}}

\date{}

\providecommand{\cfref}[1]{c.f.~\S\ref{#1}}

\definecolor{cb-black}      {RGB}{  0,   0,   0}
\definecolor{cb-blue-green} {RGB}{  0,  073,  073}
\definecolor{cb-green-sea}  {RGB}{  0, 146, 146}
\definecolor{cb-rose}       {RGB}{255, 109, 182}
\definecolor{cb-salmon-pink}{RGB}{255, 182, 119}
\definecolor{cb-purple}     {RGB}{ 73,   0, 146}
\definecolor{cb-blue}       {RGB}{ 0, 109, 219}
\definecolor{cb-lilac}      {RGB}{182, 109, 255}
\definecolor{cb-blue-sky}   {RGB}{109, 182, 255}
\definecolor{cb-blue-light} {RGB}{182, 219, 255}
\definecolor{cb-burgundy}   {RGB}{146,   0,   0}
\definecolor{cb-brown}      {RGB}{146,  73,   0}
\definecolor{cb-clay}       {RGB}{219, 209,   0}
\definecolor{cb-green-lime} {RGB}{ 36, 255,  36}

\providecommand{\secspacingtop}{}%
\providecommand{\secspacingbot}{}%
\providecommand{\subsecspacingtop}{}%
\providecommand{\subsecspacingbot}{}%

\providecommand{\sectionpage}{}

\providecommand{\sectionnumber}{\S}
\providecommand{\Mod}[1]{\ \mathrm{mod}\ #1}
\providecommand{\fakeparagraph}[1]{\vskip 0pt\noindent\textbf{#1.}}

\providecommand{\githuburl}{}

\newglossaryentry{policy}{name={policy}, plural={policies}, description={The high level objects that an admin configures.}}
\newglossaryentry{rule}{name={rule}, plural={rules}, description={The concrete constraints derived from the set of policies.}}

\newglossaryentry{extension}{name={extension}, plural={extensions}, description={Extension policies, define a set of extensions.}}

\newglossaryentry{category}{name={category}, plural={categories}, description={Semantic groups of attributes for which we establish budget constraints.}}
\newglossaryentry{level}{name={membership level}, plural={membership levels}, description={The categories have different membership levels.}}

\newglossaryentry{admin}{name={admin}, plural={admins}, description={The person / role writing the policies.}}

\newglossaryentry{risklow}{name={low}, description={}}
\newglossaryentry{riskmedium}{name={medium}, description={}}
\newglossaryentry{riskhigh}{name={high}, description={}}
\newglossaryentry{level:member}{name={member}, description={}}
\newglossaryentry{level:strong}{name={strong connection}, description={}}
\newglossaryentry{level:weak}{name={weak connection}, description={}}

\newglossaryentry{trap}{name={scenario}, description={}}
\newglossaryentry{trap:context}{name={\ul{S1: Context}}, description={}}
\newglossaryentry{trap:scope}{name={\ul{S2: Scope}}, description={}}
\newglossaryentry{trap:time}{name={\ul{S3: (Time-based) Privacy Units}}, description={}}

\newacronym{ml}{ML}{machine learning}

\newacronym{abac}{ABAC}{Attribute-based Access Control}
\newacronym{acl}{ACL}{Access Control List}

\newacronym{dp}{DP}{Differential Privacy}
\newglossaryentry{adp}{name={$(\epsilon, \delta)$ - DP}, description={}}
\newacronym{puredp}{$\epsilon$-DP}{$\epsilon$-DP}

\newacronym{rdp}{RDP}{Rényi DP}
\newacronym{zcdp}{zCDP}{Zero-Concentrated DP}
\newacronym{fdp}{$f$-DP}{$f$-DP}

\newacronym[plural=PAs,firstplural=partitioning attributes (PAs)]{pa}{PA}{partitioning attribute}

\providecommand{\myabbreviation}{Replacement of myabbreviation \xspace}

\maketitle

\begin{abstract}
Differential Privacy (DP) has emerged as a robust framework for privacy-preserving data releases
and has been successfully applied in high-profile cases, such as the 2020 US Census.
However, in organizational settings, the use of DP remains largely confined to isolated data releases.
This approach restricts the potential of DP to serve as a framework for comprehensive privacy risk management at an organizational level.
Although one might expect that the cumulative privacy risk of isolated releases could be assessed using DP's compositional property,
in practice, individual DP guarantees are frequently tailored to specific releases,
making it difficult to reason about their interaction or combined impact.
At the same time, less tailored DP guarantees, which compose more easily, also offer only limited insight because they lead to excessively large privacy budgets that convey limited meaning.
To address these limitations, we present \oursystem, a system designed to manage cumulative privacy risks across multiple data releases using DP.
Unlike traditional approaches that treat each release in isolation or rely on a single (global) DP guarantee,
our system employs a flexible framework that considers multiple DP guarantees simultaneously, reflecting the diverse contexts and scopes typical of real-world DP deployments.
\oursystem introduces a high-level policy language to formalize privacy guarantees, making traditionally implicit assumptions on scopes and contexts explicit.
By deriving the DP guarantees required to enforce complex privacy semantics from these high-level policies, \oursystem enables fine-grained privacy risk management on an organizational scale.
We implement and evaluate \oursystem, demonstrating how it mitigates privacy risks that can emerge without comprehensive, organization-wide privacy risk management.

\end{abstract}

\subsecspacingtop
\section{Introduction}
\label{sec:intro}

Remarkable algorithmic advances in data analytics over the past decade, paralleled with an unprecedented capacity to capture and process large-scale data, have shed light on the impact that data-driven approaches can have in addressing a wide range of complex societal problems.
This progress has, in turn, fueled a surge in the adoption of data-driven analytics, reshaping virtually every sector of society.
This rapid adoption of data-driven applications did not materialize without issues and has amplified concerns about individual privacy.
In response, privacy regulations, such as the European General Data Protection Regulation (GDPR), the California Consumer Privacy Act (CCPA), and others, have been enacted to ensure organizations handle data responsibly.
In addition to this regulatory pressure, the rising liability associated with data misuse and breaches is prompting organizations to adopt more systematic approaches for managing the risks inherent in the collection and processing of personal data.
While end-to-end encryption and secure computation can address these issues in a variety of settings, they are not sufficient on their own in scenarios where data must be not only processed but also shared or released~\cite{Viand2021-sokfhe, Hastings2019-mpcsok}.
For these settings, privacy tools such as Differential Privacy have shown to be promising.

\gls{dp} is a mathematically rigorous framework designed to protect individual privacy in scenarios where data analysis results are shared or released~\cite{Dwork2006-originaldp}.
It is widely accepted today as the state-of-the-art solution for privacy-preserving data releases.
Its impact is already apparent through notable success in high-profile applications, including the US Census~\cite{Census2022-web}, COVID-19 reporting~\cite{Bavadekar2020-dpgooglecovid2, Aktay2020-dpgooglecovid3}, and numerous industrial use cases~\cite{Apple2017-dp, Adeleye2023-dpwikipedia, Rogers2021-dplinkedin, Xu2024-dpgboard}.
Regulatory bodies are increasingly endorsing DP in data-sharing practices, as seen in the EU Data Governance Act (DGA), highlighting its growing acceptance within the legal framework~\cite{EU2022-dga}.
This rapid adoption of \gls{dp} can be attributed to its ability to provide robust privacy protections for a wide variety of complex applications.
In contrast to prior ad-hoc approaches to privacy guarantees, \gls{dp} provides a formal framework for defining and quantifying privacy loss.
This allows \gls{dp} to provide strong guarantees that hold against all possible attack strategies, including attackers with auxiliary information.
In addition, \gls{dp} defines formal composition properties, which enable precise reasoning about the cumulative effects of multiple different releases.
While deriving \gls{dp} guarantees for applications can be highly involved, the resulting guarantees can be expressed in a small number of privacy parameters, independently of application details.

\gls{dp} was originally considered in the context of assessing
the privacy risk of participating in a release from the perspective of an individual.
However, \gls{dp} also holds significant potential as a tool for organizations to quantify and manage privacy risks at an organizational level.
By protecting the privacy of \emph{any} user, \gls{dp} naturally allows one to make statements about the privacy risks associated with performing a data release.
More importantly, due to its composition properties, \gls{dp} enables reasoning about the cumulative privacy risks of performing a release in the context of other releases.
Despite this, \gls{dp} has not yet been systematically used or integrated into organizations for this purpose.
There exists a significant gap between the conceptual potential of \gls{dp} as a comprehensive privacy risk management framework and the way \gls{dp} is being implemented and used in practice.
Currently, organizations primarily consider \gls{dp} from the perspective of individual releases, similar to how prior ad-hoc anonymization techniques were used.
However, this approach fails to capitalize on \gls{dp}’s potential for evaluating and assessing cumulative privacy risks.
Shifting to an organization-wide perspective requires a fundamentally different approach to \gls{dp} guarantees as, in practice, there are significant gaps between the isolated guarantees provided by current practices and the desired global guarantees.

\fakeparagraph{\gls{dp} at Organizational Scale}
In current organizational practices, differentially private data releases are generally reported in isolation,
describing the datasets, algorithms, and corresponding privacy guarantees.
However, this approach is only justified if the releases are independent, an assumption that may not always hold in practice.
If any of these implicit assumptions fail to hold, the actual privacy risk associated with the release may significantly exceed what the per-release privacy guarantees suggest.
For example, multiple releases over related datasets can jointly enable practical reconstruction attacks, even if each release's privacy guarantees are sufficiently strong to prevent such attacks when considered in isolation (\cfref{sec:bg}).
Although one might expect that, for a small number of releases, the cumulative privacy risk could be easily assessed under different independence assumptions due to the compositional property of \gls{dp}, in practice, the privacy guarantees reported for real-world data releases are often highly tailored to the specific use case.
For example, the recent Wikimedia statistics release~\cite{Adeleye2023-dpwikipedia,Desfontaines2022-dppractice} describes the privacy guarantees for viewing and editing statistics using several complex, state-of-the-art approaches.
In particular, they define the protected change (i.e., privacy unit) that is at the root of \gls{dp} guarantees in project- and release-specific ways, which vary across different data releases.
This is essential for achieving tight privacy bounds but complicates the task of interpreting or comparing privacy guarantees across releases. For instance, in the Wikimedia example, the releases on editing and viewing statistics apply different definitions of users and time scales and employ varied geographic groupings.
While each guarantee is interpretable and reasonable in isolation, these  differences make it challenging, if not impossible, to assess cumulative privacy risks across multiple releases.

Relying on complex privacy units for \gls{dp} analysis
not only complicates cumulative privacy impact assessments,
it can also create a false sense of privacy:
while narrow privacy units enable attractive privacy loss parameters,
the actual protection offered may be significantly weaker if any of their underlying assumptions fall short.
Even when all implicit independence assumptions hold and the individual privacy guarantees employ compatible privacy units, it can be challenging to obtain a useful understanding of the overall privacy implications solely from these guarantees.
Even with state-of-the-art composition theorems~\cite{Mironov2017-rdp, Bun2016-zCDP, Dong2019-fdpclass}, the cumulative effect of multiple data releases causes the privacy parameters (e.g., $\epsilon$) to increase rapidly to levels that are no longer meaningful on their own.
This is because the guarantees provided by \gls{dp} degrade exponentially with increasing $\epsilon$.
This severely limits the range of meaningful privacy parameters, for example, at $\epsilon \geq 7$, the privacy leakage for the worst-case attacker considered in \gls{dp} is already effectively no longer meaningfully bounded mathematically~\cite{Desfontaines2024-largeeps}.

For simple applications such as counting queries, it is easy to observe that, as the privacy parameters increase, the amount of noise required to satisfy the \gls{dp} definition becomes too small to hide individual contributions in practice.
Empirical privacy auditing has shown that this also holds for more complex applications such as \gls{ml}~\cite{Nasr2021-dpmladv}.
Specifically, in the worst-case scenario, there is no gap between the theoretical differential privacy upper bounds and the practical success rates of attacks against the DP-SGD algorithm~\cite{Nasr2021-dpmladv}.
However, depending on certain (reasonable) deployment assumptions, the currently best-known privacy attacks indicate that there may be a significant gap in more constrained settings, making larger privacy budgets justifiable and generally accepted in practice in these contexts~\cite{Nasr2021-dpmladv, Nasr2023-fdpmlaudit, Steinke2024-dpsdgheuaud}.
In real-world deployments, there is a variety of contexts that one might want to consider,
e.g., privacy guarantees appropriate for internal (exploratory) releases %
frequently differ from those for external public releases.
This poses a fundamental challenge in applying \gls{dp} for privacy risk management at an organizational scale, as privacy losses appropriate for different contexts cannot be directly combined:
privacy parameters appropriate for counting queries
are unlikely to be sufficient to allow meaningful \gls{ml} training.
Meanwhile, generally accepted privacy parameters for \gls{ml} training %
would be highly inappropriate if applied to a single counting query release. %
As a result, considering only a single global privacy guarantee across different contexts is generally not a viable approach to \gls{dp} at an organizational level.

\fakeparagraph{Our Approach}
In this work, we enable  \gls{dp} to deliver on its potential as a risk management tool for organizational-scale data operations.
The key insight of our approach is that, to provide meaningful  \gls{dp} guarantees at this scale, we are not limited to individual  \gls{dp} guarantees.
Instead, we can achieve this by leveraging the combined effect of a set of guarantees specifically designed to complement each other.
We make the typically implicit assumptions underlying state-of-the-art \gls{dp} analysis visible and manageable by explicitly defining the scope and context of each guarantee.
We refer to the combination of scope, context, and  \gls{dp} guarantee as a \emph{\gls{rule}}, and the overall privacy guarantee achieved by a set of such \emph{\glspl{rule}} as the \emph{privacy semantics}.
By considering sets of \glspl{rule} rather than individual guarantees, our approach enables effective management of different contexts within the same system.
For example, this flexibility allows \gls{ml} applications to use empirically motivated, higher privacy loss parameters, while simultaneously enforcing stricter constraints
for releases requiring a worst-case attacker model.
Similarly, we can simultaneously consider multiple scopes, such as varying independence assumptions, to express more nuanced privacy semantics that incorporate both privacy loss parameters and the risks arising from potentially flawed assumptions.
Our approach utilizes narrow scopes (that are meaningless by themselves) to control privacy risks at a more granular level, even down to specific data attributes.
Finally,
our approach can provide release-specific guarantees (i.e., as in current practice) while still maintaining comparability of different releases, by also considering less specialized, compatible privacy units. %
Beyond this, by viewing releases through the lens of multiple privacy units simultaneously, our approach provides combined guarantees beyond what is possible to obtain from any given privacy unit individually.

\fakeparagraph{\oursystem}
In this work, we introduce \oursystem, a privacy risk management system which presents three key contributions:

\noindent
\emph{(i)} \emph{\gls{dp} Policy Language:}
Capturing the desired overall privacy semantics can require a large number of different scopes and contexts and, therefore, result in a large number of \glspl{rule} to consider.
At the organizational level, manually specifying and managing such a rule set is impractical, as ensuring that the rule set is permissive enough to accommodate intended use cases without inadvertently creating gaps in privacy guarantees poses a significant challenge.
\oursystem addresses this by providing a high-level \gls{policy} language to describe the desired privacy semantics in a concise and interpretable manner.
Our system translates a small set of \glspl{policy} into the large and complex set of interrelated \glspl{rule} necessary to encode the specified privacy semantics.

\noindent
\emph{(ii)} \emph{Policy Enforcement:}
In \oursystem, achieving the privacy semantics defined by the \gls{policy} set requires checking each release against all \glspl{rule} in the generated \gls{rule} set.
Specifically, this requires considering the composition of all prior releases in a \gls{rule}'s scope, making enforcement inherently stateful.
As enforcement complexity scales with the size of the \gls{rule} set, which can be very large for complex \glspl{policy}, this can introduce scalability issues.
We introduce an optimization that exploits the significant potential to reduce the size of the rule set by considering that the privacy guarantee of a rule might already be implied by another rule.
Specifically, our optimization reduces the \gls{rule} set size while preserving its privacy semantics by carefully identifying and pruning \emph{non-constraining} \glspl{rule}.

\noindent
\emph{(iii)} \emph{Integration with existing \gls{dp} Systems:}
\oursystem can manage privacy risks
for large-scale, complex one-off data releases (e.g., Census releases), or integrate with existing \gls{dp} budget allocation systems.
To demonstrate its practical applicability and effectiveness,
we implemented \oursystem and integrated it with Cohere, a state-of-the-art system for allocating limited privacy budgets across applications~\cite{Kuchler2024-cohere}.
We make our implementation of \oursystem available as open source\!\footnote{\url{https://github.com/pps-lab/dpolicy}}
and
evaluate it on a series of workloads to demonstrate
that \oursystem effectively prevents privacy risks that can occur with simpler approaches to privacy management.

\subsecspacingtop
\section{Background}
\label{sec:bg}
Below, we provide a brief overview of \gls{dp}
and refer to Dwork et al.~\cite{Dwork2014-dpbook} for a comprehensive formal treatment. %

\fakeparagraph{Differential Privacy}
\gls{dp} is a mathematical definition of privacy in the context of statistical data releases.
Informally, the definition captures that the result of an analysis should stay approximately the same, independent of whether or not any one individual contributed their data.
More formally, a randomized algorithm $M$  is a $(\epsilon, \delta)$-differentially private mechanism~\cite{Dwork2014-dpbook} if, for any set of results $\mathcal{S} \subseteq Range(M)$ and any two neighboring datasets $D, D'$, it holds that:
\begin{equation*}
    Pr[M(D)\in \mathcal{S}] \leq exp(\epsilon) \cdot Pr[M(D') \in \mathcal{S}] + \delta.
\end{equation*}
The privacy guarantees are determined by the parameters $\epsilon > 0$ and $\delta \in [0, 1)$ that bound the privacy loss of the release and by the definition of neighboring datasets, which describes the protected change.
Neighboring datasets are expressed in terms of a distance metric between databases, which captures two aspects:
\emph{(i)} what is the protected \emph{privacy unit}, and \emph{(ii)} how can the privacy unit change.

There are two commonly used ways for defining the change in neighboring datasets:
In \emph{bounded \gls{dp}}, the dataset size is known, and neighboring datasets differ by the substitution of a single unit.
In \emph{unbounded \gls{dp}}, the database size is unknown, and a single unit is either removed or added (rather than substituted).
Note that a privacy guarantee for unbounded \gls{dp} implies a guarantee for bounded \gls{dp} by observing that one addition and one removal corresponds to a substitute-one operation, i.e., an application of group privacy of size two~\cite{Desfontaines2019-dpsok}.
However, in bounded \gls{dp}, algorithms can leverage the fixed dataset size to enable a more refined analysis, making the two definitions not directly comparable.
Moreover, there are different models of \gls{dp} differing in trust assumptions.
Two widely adopted models are the central model of \gls{dp}, in which the party computing the \gls{dp} mechanism has access to the raw underlying data, and the local model, where no such trusted party is required but achieving similar utility requires larger privacy parameters.

\fakeparagraph{Composition and Group Privacy}
\gls{dp} composition theorems provide a method for bounding the cumulative privacy cost of multiple data releases.
For instance, by sequential composition, a series of $\epsilon_i$ \gls{dp} mechanisms, satisfies $(\sum_i \epsilon_i)$-\gls{dp}.
By parallel composition, a series of $\epsilon_i$ \gls{dp} mechanisms that operate on disjoint parts of a database satisfies  $(\max_i \epsilon_i)$-\gls{dp}.
Beyond the \gls{adp} variant, there are additional \gls{dp} variants with more attractive composition properties such as \gls{rdp}~\cite{Mironov2017-rdp} and \gls{zcdp}~\cite{Bun2016-zCDP}, which bound the average privacy loss using the Rényi divergence,
or \glstext{fdp}~\cite{Dong2019-fdpclass}, which follows the hypothesis testing interpretation of \gls{dp}.
These definitions provide different tradeoffs between flexibility, tightness, and composition complexity.
Conversions between different variants are possible in some directions, and all can be converted back into an \gls{adp} guarantee.

\gls{dp} guarantees are fundamentally concerned with the privacy loss between neighboring datasets, i.e., differing in one privacy unit.
Group privacy extends this notion by allowing up to $k$ differences, where $k$ is the group size.
Any $(\epsilon, 0)$-DP mechanism, is $(k \cdot \epsilon, 0)$-DP~\cite{Dwork2014-dpbook}.
For other \gls{dp} variants, or when $\delta \neq 0$ in \gls{adp}, group privacy is more involved and is separate from k-fold composition~\cite{Mironov2017-rdp, Bun2016-zCDP, Dong2019-fdpclass}.
For example, the $k$-fold composition of a $\rho$-\gls{zcdp} mechanism results in $(k \cdot \rho)$-\gls{zcdp}, while $k$-group privacy for the same mechanism results in $(k^2 \cdot \rho)$-\gls{zcdp}.

\fakeparagraph{Privacy Units}
In \gls{dp}, the natural privacy unit is the user, meaning all records associated with an individual are protected.
However, in practice, the privacy unit is often relaxed,
which allows achieving smaller privacy parameters at the cost of providing protections only for certain changes\footnote{Using group privacy allows obtaining an upper bound on the privacy loss across multiple privacy units.} and frequently harder to interpret
privacy semantics~\cite{Amin2024-dppractical}.
Deployments that perform releases regularly, e.g., monthly, frequently resort to time-based privacy units, where they, e.g., protect all user contributions of a single month instead of all user contributions.
Implicitly, such deployments make the assumption that the individual releases are hard to link across these time boundaries.
Without resorting to time-based privacy units, continuously applying new mechanisms under a finite budget is generally only possible through user rotation, where users are retired and replaced with new users as their budget is depleted~\cite{Kuchler2024-cohere, Luo2021-privacysched}.
Another form of privacy unit relaxation can consider that only part of the data is protected.
Kifer et al. define Attribute-DP~\cite{Kifer2011-dpnolunch}, where the focus is on protecting only an individual attribute of a database.
Label-DP\cite{Chaudhuri2011-labeldp} is a variant of this, where only the label of a supervised machine learning task is protected.
In addition, there are generalizations of the notion of neighboring datasets that can model various approaches, including time- and attribute-based privacy units~\cite{He2013-blowfish, Kifer2014-pufferfish}.

\fakeparagraph{Privacy Attacks}
\gls{dp} guarantees hinge on privacy parameters, such as the choice of privacy unit and the value of $\epsilon$.
If these are not chosen appropriately, even \gls{dp}-protected releases can become vulnerable to privacy attacks.
Several works have demonstrated this phenomenon across various domains, including aggregate location data~\cite{Pyrgelis2017-dplocattack, Pyrgelis2020-dplocattack}, tabular data~\cite{Stadler2020-syntheticdatafail, Annamalai2024-dpsynthattack} and machine learning models~\cite{Jayaraman2019-dpmlattack, Nasr2021-dpmladv, Carlini2022-minf, Nasr2023-fdpmlaudit}.
Such vulnerabilities have also been demonstrated in real-world \gls{dp} deployments.
For instance, Gadotti et al. showed that repeated observations can be exploited to extract sensitive information from Apple's data collection~\cite{Gadotti2022-dpappleattack}.
Moreover, Houssiau et al. demonstrated that incorrect assumptions about the privacy unit led to a false sense of privacy in Google's \gls{dp}-based location data release~\cite{Houssiau2022-dplocfail}.
While most attack literature focuses on specific releases, in practice, the line between a single release with multiple queries and multiple releases is blurry, and most existing attack literature also directly applies to the multi-release setting  considered in this paper, as we illustrate below. %

First, we consider how  neglecting privacy across multiple releases can trivially empower attackers by demonstrating how  membership inference attacks can be easily combined across different machine learning models, even when the models operate in entirely different domains.
For example, the likelihood ratio attack (LiRA)~\cite{Carlini2022-minf} trains \emph{shadow} models on datasets that either include or exclude the target sample.
It then performs a likelihood ratio test between the target sample's loss on the actual model and the loss distributions generated by the two sets of shadow models.
In a multi-release setting, the confidence of such an attack can be significantly amplified if the adversary knows that an individual is either included in the training data of both models or neither.
For example, if independent attacks on each model result in a 75\% belief, Bayesian inference can combine these probabilities to increase confidence to 90\%.
Per-release management approaches are fundamentally incapable of addressing this, requiring an approach that can track risks across different releases.

Second, we consider how underlying user data can be recovered even from \gls{dp}-protected synthetic data releases unless privacy is considered at a fine-grained per-attribute level.
For example, Annamalai et al. demonstrate an attack on the ``select-measure-generate'' approach~\cite{Liu2021-dpsynthunify} used by most synthetic data generation algorithms~\cite{Annamalai2024-dpsynthattack}.
In each round (which we can consider a release), these algorithms
select a set of attributes and release their marginals to refine the synthetic representation.
By default, these algorithms do not provide a mechanism to control how often specific attributes are selected.
As a result, there is a risk that the target attribute in an attribute-inference attack may accumulate a disproportionate share of the overall privacy cost, leading to a successful attack.
Preventing such issues requires a substantially more fine-grained approach to privacy risk management, that can prevent the accumulation of privacy loss and ensure a more even distribution.

\section{Related Work}
\label{sec:relwork}

Below, we discuss work relevant to \oursystem, including existing \gls{dp} policy and management approaches.

\fakeparagraph{Semantic Structure of the Privacy Budget}
The most closely related line of work to \oursystem includes approaches that augment guarantees over large privacy parameters with additional fine-grained information.
Ghazi et al., define per-attribute Partial-DP~\cite{Ghazi2022-dpcontrol}, which enables them to assign a smaller per-attribute budget, $\epsilon_0$, while permitting a larger overall $\epsilon$.
Their approach justifies a higher total $\epsilon$ by showing that privacy loss is not disproportionately concentrated on any single attribute.
Our approach similarly ensures that high overall
$\epsilon$ is not concentrated on individual attributes,
but also supports a much wider set of additional budgets,
does not require specific algorithmic support and can support both bounded and unbounded \gls{dp}.
Kifer et al., conducted an ex-post case study on the 2020
Census redistricting data release to examine the attribute-based privacy semantics of the release in finer detail~\cite{Kifer2022-censussemantics}.
They define eight scenarios, each corresponding to a unique combination of attributes, and accumulate privacy loss from queries influenced by these attributes.
This approach is similar in spirit to our approach in \oursystem, but their work only considered a small set of manually selected ad-hoc scenarios and lacks a systematic method for defining scenarios.
In addition, they only support attribute-based constraints and do not consider policy definition or constraint enforcement.

\fakeparagraph{Policy Frameworks for DP}
Unlike common cryptographic parameters, \gls{dp} parameters are widely accepted to be highly context-dependent and cannot be chosen in an application-agnostic manner.
This has resulted in a wide proliferation of approaches to determining and reporting \gls{dp} guarantees.
To promote transparency, accountability, and comparability of these parameters, prominent \gls{dp} researchers have called for the creation of an ``epsilon registry''~\cite{Desfontaines2022-dppractice, Oblivious2024-epsregistry}.
Currently, most \gls{dp} deployments are documented in technical blogs or research papers.
These documents serve as informal \gls{dp} policies, which minimally specify critical aspects~\cite{Ponomareva2023-dpmlbestpract} such as the \gls{dp} model (e.g., central or local \gls{dp}), the neighboring relation (e.g., add/remove or replace-one), the privacy unit (e.g., user-level), and the privacy parameters (e.g., $\epsilon$, $\delta$).
In some cases, additional information is provided, including hyperparameter tuning, the type of accounting used, and specific mechanisms.
Such details enhance transparency and provide a richer context for evaluating \gls{dp} guarantees.
Standardization bodies, such as the National Institute of Standards and Technology (NIST), are also working to improve this area.
For example, NIST is developing guidelines for evaluating \gls{dp} guarantees~\cite{NIST2023-dpguideline}.
Most of this documentation covers individual deployments; hence, obtaining an aggregated view of privacy risk requires piecing together information from multiple sources.
Apple is a notable exception, as they provide a consolidated document listing all their \gls{dp} applications~\cite{Apple2017-dp}.
However, even in this case, the cumulative privacy impact across all deployments is not explicitly discussed and would require further analysis of the individual parameters.
\oursystem not only facilitates per-release \glspl{policy} but also provides structured, comprehensive global \glspl{policy}, allowing for a cohesive overview of aggregate \gls{dp} usage.
Beyond current \gls{dp} reporting practice, Benthall et al. consider an integration of \gls{dp} with the conceptual framework of contextual integrity (CI)~\cite{Benthall2024-dpcontextualint}.
This establishes a theoretical model considering \gls{dp}  based on the context in which data is used.
In their framework, CI views privacy as appropriate information flow within a specific social context.
Their approach augments CI with a descriptive transmission property that dictates that only information flows under a \gls{dp} guarantee are considered appropriate in certain contexts.
While this could be seen as a type of policy, their work is purely conceptual and orthogonal to the concrete challenges of providing meaningful guarantees over multiple releases.

\fakeparagraph{Privacy Management for DP}
 Recent research has focused on treating the \gls{dp} budget as a shared resource and consequently proposes the design of systems that manage access~\cite{Lecuyer2019-sage} or address the allocation of this resource to applications for optimized efficiency~\cite{Kuchler2024-cohere, Tholoniat2022-dppacking, Xiao2024-dpmgmt} and/or fairness~\cite{Luo2021-privacysched, Pujol2021-budgetsharing, Liu2024-dpmgmt}.
 For example, K\"uchler et al. propose Cohere, which focuses on unifying the allocation of a wide range of different \gls{dp} mechanisms. %
These approaches are orthogonal to our work, as \oursystem only provides the constraints and is independent of the concrete (manual or automated) allocation strategy used.

\subsecspacingtop
\section{Policy Language}
\subsecspacingbot
\label{sec:policy}

In this section, we introduce the policy language employed by \oursystem to specify desired privacy semantics.
Defining \gls{dp} \glspl{policy} enables a structured approach to managing privacy constraints without the need to specify an extremely large set of \glspl{rule} manually.
This approach mirrors the advantages of defining access control policies via \gls{abac} rather than managing an \gls{acl} directly.\footnote{Note that \oursystem is designed to complement, not replace, existing access control systems.}
In addition to providing a formal specification, we discuss how our policy language facilitates expressing complex privacy semantics and illustrate its use through concrete examples.

\begin{figure*}
    \centering
    \begin{subfigure}[t]{0.33\textwidth}
        \centering
          \includegraphics[width=\linewidth, trim=2 5 28 0, clip]{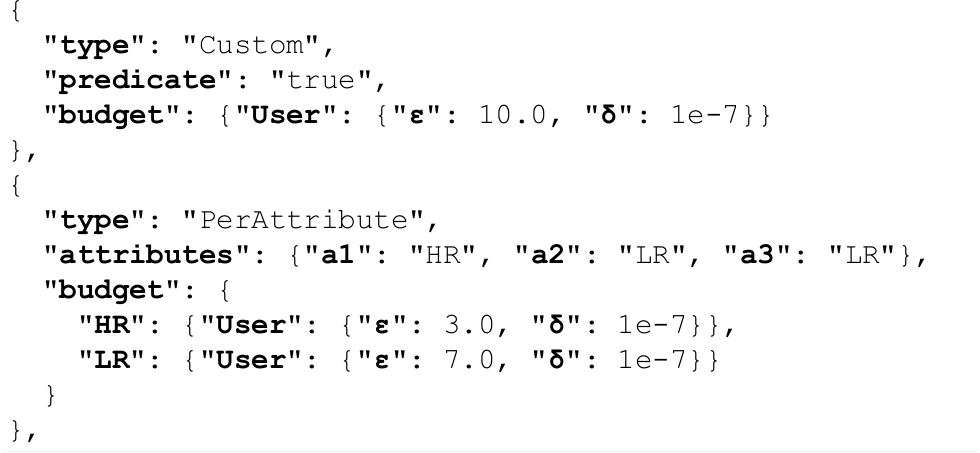}

        \vfill
        \caption{Base Policies [1/2]}
        \label{fig:policy:custom}
        \label{fig:policy:perattr}
    \end{subfigure}
    \hfill
    \begin{subfigure}[t]{0.31\textwidth}
        \centering
        \includegraphics[width=\linewidth, trim=2 5 70 0, clip]{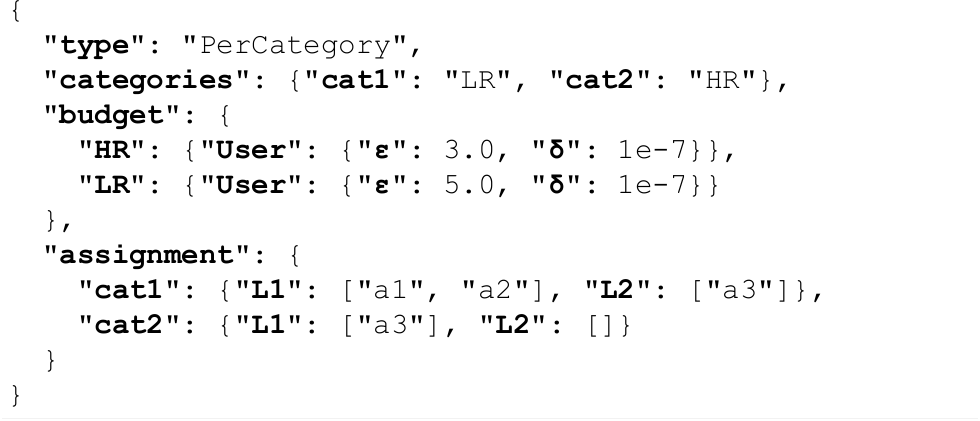}

        \vfill
        \caption{Base Policies [2/2]}
        \label{fig:policy:percat}
    \end{subfigure}
    \hfill
    \begin{subfigure}[t]{0.34\textwidth}
        \centering
        \includegraphics[width=\linewidth, trim=2 5 5 0, clip]{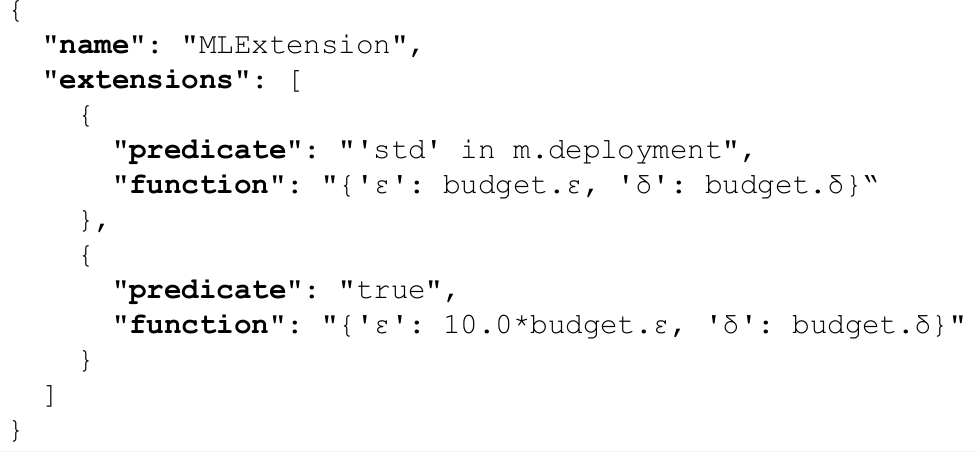}
          \vfill
        \caption{Extension Policy}
        \label{fig:policy:ext}
    \end{subfigure}
       \caption[An example \gls{policy} set in \oursystem]{An example \gls{policy} set in \oursystem, demonstrating a custom \gls{policy} defining a global user budget, per-attribute and per-category \glspl{policy} and an extension \gls{policy} for differentiating between standard and ``black-box'' ML contexts.
       }
       \label{fig:policies}
\end{figure*}

\subsection{Core Concepts}
\label{sec:policy:core}
At the core of \oursystem is a policy language that allows \glspl{admin} to specify the desired \gls{dp} guarantees through a set of high-level \emph{\glspl{policy}}.
From these policies, our system generates a set of concrete \emph{\glspl{rule}} that correspond to individual \gls{dp} guarantees.
A series of releases satisfies a set of \glspl{policy} only if it satisfies \emph{all} \glspl{rule} generated by those \glspl{policy}.
That is, \oursystem ensures that each individual guarantee holds independently of any other \glspl{rule} present in the system.
In addition to \glspl{policy} that apply across all releases, we also support \emph{per-release} policies.
These can model the type of isolated guarantees common in current approaches to \gls{dp} management.
\oursystem supports the same features for both per-release and across-release policies, so we focus on the latter in the following discussion.
\Glspl{policy} are divided into \emph{base} and \emph{extension} \glspl{policy}.
Base \glspl{policy} directly generate (intermediate) \glspl{rule}, while extension \glspl{policy} act on the \glspl{rule} generated by the base \glspl{policy} and expand each into multiple new \glspl{rule}.
\oursystem uses extension \glspl{policy} to efficiently express privacy guarantees in a context-specific manner.
In \Cref{sec:enf:prune}, we discuss how \oursystem applies a series of optimizations that eliminate redundant \glspl{rule} to minimize the size of the final set of \glspl{rule}.

\fakeparagraph{Formal Definition}
In \oursystem, the fundamental unit is a \emph{\gls{rule}}.
Each \gls{rule} defines a \gls{dp} budget (e.g., an $\epsilon$ limit) for a specific \emph{scope}, which determines which DP applications are covered by this guarantee, and a \emph{context}, which encodes assumptions about the setting of these \gls{dp} applications.
Formally, we consider a multimap of \emph{labels} on the  \gls{dp} mechanisms that make up the \gls{dp} applications.
Let $\mathcal{M}$ denote this universe of labeled \gls{dp} mechanisms.
We then model scopes and context as a predicate $\phi_k: \mathcal{M} \rightarrow \{0, 1\}$ over the mechanisms' labels.
A \gls{rule} $r_k$ is defined as a tuple $(\phi_k, U_k, B_k)$, consisting of such a predicate and a privacy budget $B_k$ relative to a privacy unit $U_k$.
Note that the budget can be expressed in any variant of DP (e.g., \gls{adp}, \gls{rdp}, \gls{zcdp}).
Given a composition of mechanisms $M = (m_1, m_2, \ldots, m_N)$, the rule $r_k$ requires that the (sub) composition of the mechanisms $M_k$ matching the predicate $\phi_k$, satisfies the privacy budget $B_k$\footnote{This is similar to applying a privacy filter only within the specific scope and context defined by the predicate.}.

In \oursystem, \glspl{rule} are derived from the set of policies provided by the \gls{admin}.
Let $\mathcal{R}$ denote the universe of \glspl{rule}.
There are two types of \glspl{policy} in \oursystem: \emph{(i)} base \glspl{policy}, and \emph{(ii)} extension \glspl{policy}.
Formally, each base \gls{policy} $P_B$ defines a set of intermediate rules:
\begin{equation}
    \hat{R} = \{(\phi_1, U_1, B_1), \; (\phi_2, U_2, B_2), \; \ldots \}
\end{equation}
An extension \gls{policy} $P_E$ with $\ell$ extensions defines a mapping $extend: \mathcal{R} \rightarrow \mathcal{R}^{\ell}$, which maps each intermediate \gls{rule} to a set of $\ell$ extended intermediate \glspl{rule} (\cfref{sec:policy:ctx}).
In \oursystem, the simplest type of \gls{policy} is a \emph{custom} base \gls{policy} that directly specifies the predicate (using CEL~\cite{Google2024-cel}), privacy unit, and budget (c.f.~\Cref{fig:policy:custom}).
For example, a \gls{dp} bound for a global scope can be expressed using a predicate $\phi_*(m) \mapsto 1$, which matches on all mechanisms.
However, our system primarily relies on significantly more powerful \glspl{policy} to concisely specify the intended privacy guarantees.
We discuss the policies and their use in the following.

\subsection{Context-specific Privacy Budgets}
\label{sec:policy:ctx}

Although \gls{dp}, in theory, provides a unified, application-independent approach to privacy, it is generally accepted that, in practice, adequate privacy budgets cannot be determined without additional context.
For example, privacy budgets suitable for counting queries (e.g., $\epsilon < 1$) are typically insufficient to allow meaningful \gls{ml} training.
Empirical privacy auditing has shown that \gls{dp} analyses of complex applications, such as the DP-SGD algorithm, are tight for worst-case attackers~\cite{Nasr2021-dpmladv}.  However, the best-known privacy attacks~\cite{Nasr2021-dpmladv, Nasr2023-fdpmlaudit, Steinke2024-dpsdgheuaud} suggest that a significant gap may exist for less powerful adversaries.
Therefore, depending on certain (reasonable) deployment assumptions, the use of larger privacy budgets for \gls{ml} might be justifiable and is generally accepted in practice~ \cite{Ponomareva2023-dpmlbestpract}.

\newglossaryentry{setting}{name={setting}, plural={settings}, description={}}

\fakeparagraph{Context-aware \Glspl{rule}}
Although the context-sensitive nature of \gls{dp} budgets might suggest that managing privacy risks in a unified, global manner is fundamentally impossible, in \oursystem we show how we can effectively model and address this complexity by leveraging context-aware \glspl{rule}.
Consider, for example, a scenario where we aim to enforce a strict privacy budget for general applications (``standard'' setting) while permitting a relaxed privacy budget in justified cases, such as in \gls{ml} deployments that limit an attacker's access capabilities (e.g., ``black-box'' setting).
An intuitive solution might involve defining a \gls{rule} with a relaxed privacy budget that matches only ``black-box'' ML deployments alongside a \gls{rule} with a strict privacy budget that matches any mechanism.
However, this approach does not succeed because the strict \gls{rule} would also apply to \gls{ml} mechanisms, thereby inadvertently enforcing the strict privacy budget on them.\footnote{Recall that in \oursystem, mechanisms must satisfy the privacy guarantees of all \glspl{rule} they match.}
Instead, we introduce a \gls{rule} with a strict budget that tracks \emph{only} applications that fall into the standard setting, and a second \gls{rule} with the relaxed budget that tracks the privacy costs of \emph{both} the standard and the black box \gls{ml} releases.
Note that, by themselves, neither rule provides sufficient guarantees:
the first rule omits many mechanisms from its accounting, and the second rule does not guarantee appropriate privacy loss for applications in the standard setting (e.g., count queries).
However, when these rules are employed together, they achieve the desired effect.
Black-box \gls{ml} applications can utilize relaxed budgets appropriate for their context, while all other applications remain constrained by the stricter budget.

\fakeparagraph{Extension \Glspl{policy}}
Manually specifying, for each (base) \gls{policy} in the system, how budgets change across different \glspl{setting} does not scale well in practice.
This is because it permits inconsistencies in both the assumptions underlying each setting and the degree of relaxation applied,  making it challenging to specify complex \gls{policy} sets.
To address this issue, \oursystem adopts a different approach.  As illustrated in \Cref{fig:policy:ext}, \glspl{admin} can create extension \glspl{policy} by defining various \glspl{setting} (e.g., ``standard'' and ``black-box'') and specifying how they affect the corresponding budgets.
This, however, raises the question of how to express this budget adjustment.
In \oursystem, we enable the \gls{admin} to provide a custom function that maps a base budget to a new budget.
To configure these functions, \glspl{admin} can leverage insights from empirical privacy auditing~\cite{Nasr2021-dpmladv}. %
Formally, an extension \gls{policy} $P_E$ is defined by a set of extensions:
\begin{equation*}
    E = \{(\phi_1, U_1, f_{1}), \; (\phi_2, U_2, f_2), \; \ldots, \; (\phi_{\ell}, U_{\ell}, f_{\ell})\}
\end{equation*}
Each extension is a pair $(\phi_k, U_k, f_k)$ consisting of a predicate over the mechanisms $\phi_k: \mathcal{M} \rightarrow \{0,1\}$, and a function $f_k: B \rightarrow B$ on privacy budgets for the privacy unit $U_k$.
For each (intermediate) rule in our system, an extension \gls{policy} with $\ell$ extensions generates a set with $\ell$ new rules by applying the $extend: \mathcal{R} \mapsto \mathcal{R}^{\ell}$ mapping:
\begin{equation*}
extend\left((\phi_k, U_k, B_{k})\right) \mapsto \left\{
\begin{aligned}
    & (\phi_1 \land \phi_k, \; U_k, \; f_{1}(B_{k})), \\
    & (\phi_2 \land \phi_k, \; U_k, \; f_2(B_{k})), \\
    & \qquad\qquad \cdots \\
    & (\phi_{\ell} \land \phi_k, \; U_k, \; f_{\ell}(B_{k}))
\end{aligned}
\right\}
\end{equation*}

Note that every extension \gls{policy} must contain an extension $(\phi_*, U_k, f_k)$ that applies to all mechanisms and bounds the total privacy budget, i.e., $\phi_* \mapsto 1$.
This is necessary to ensure that if a mechanism was part of the scope of the original intermediate rule, it is also guaranteed to still be in the scope of at least one of the new rules.
Without this requirement, applying an extension policy might have unintended consequences.
Similarly, each extension \gls{policy} typically contains an extension $(\phi_k, U_k, f_{id})$, where the function on privacy budgets is the identity function.
This ensures that the original budget specified in the intermediate rule is preserved in at least some setting.

\fakeparagraph{Combining Extension Policies}
In general, contexts in \gls{dp} depend not only on a single factor (e.g., \gls{ml} deployment assumptions) but on a combination of multiple factors (e.g., public release vs. internal exploratory use and/or current data vs. historical data).
For example, most applications require some form of \gls{dp} hyperparameter selection~\cite{Liu2019-dphyperselect, Papernot2021-dphyperparam}.
Accounting for this selection in \gls{dp} guarantees typically necessitates significantly larger privacy budgets, as worst-case attacks may exploit hyperparameter choices.
However, many reported \gls{dp} guarantees implicitly assume that such attacks are impractical and present their guarantees using the data custodian model, i.e., without accounting for hyperparameter tuning.
Similarly, it is often assumed that historical data is less privacy-sensitive, permitting higher privacy budgets.
While an extension \gls{policy} can represent arbitrarily complex contexts by considering multiple dimensions,
expressing this complexity within a single policy quickly becomes impractical.
Instead, in \oursystem, we support the automatic combination of different extension policies, each addressing a single dimension.
For example, one can define a \gls{policy} for \gls{ml} deployment budget relaxations and another \gls{policy} to differentiate between the budget allocated to the final released mechanisms and the budget including hyperparameter tuning.
This approach enables \oursystem to construct complex contexts concisely.

\begin{algorithm}[t]
\begin{algorithmic}[1]
\Function{ApplyExtensions}{irules, epolicies}
    \For{$\textsc{Extend}$ \textbf{in} epolicies}
        \State IR $\leftarrow$ $\emptyset$
        \For{irule \textbf{in} irules}
            \State IR $\leftarrow$ IR $\cup$  \Call{Extend}{irule}
        \EndFor
        \State irules $\leftarrow$ IR
    \EndFor
    \State \Return irules
\EndFunction
\end{algorithmic}
\caption{The \oursystem intermediate \gls{rule} expansion.}
\label{algo:iruleext}
\end{algorithm}

\Cref{algo:iruleext} demonstrates how \oursystem combines multiple extension \glspl{policy} to expand an initial intermediate set of \glspl{rule} (for the same privacy unit).
The algorithm sequentially applies the mapping of each extension \gls{policy} $extend: \mathcal{R} \mapsto \mathcal{R}^{\ell}$, replacing each intermediate \gls{rule}, with a set of newly generated intermediate \glspl{rule}.
Formally, let $\hat{R}$ be a set of base (intermediate) \glspl{rule}, and $(E_1, E_2, \ldots, E_M)$ represent a sequence of $M$ extension \glspl{policy}.
The resulting \gls{rule} set $R$ contains a (final) \gls{rule} for each element in the Cartesian product of $\hat{R}$ and the extensions, i.e., $|R| = |\hat{R}| \times |E_1| \times \ldots \times |E_M|$.
Each (final) \gls{rule} $(\phi_k, U_k, B_k) \in R$ contains a predicate:
\begin{equation}
\label{eq:rulepreddecomp}
     \phi_{k} = \phi_{k_B}^{(B)} \land \phi_{k_{1}}^{(1)} \land \ldots \land \phi_{k_{M}}^{(M)}
\end{equation}
which is the conjunction of the base \gls{rule} predicate $\phi_{k_B}^{(B)}$ and predicates from each extension \gls{policy}, and a budget:
\begin{equation}
     B_k = f_{k_M}^{(M)}\left( \; \ldots \; f_{k_2}^{(2)}\left(f_{k_1}^{(1)}\left(B_{k_B}^{(B)}\right)\right) \ldots \right)
\end{equation}
defined by applying the composition of the budget functions of the different extensions to the base budget $B_{k_B}^{(B)}$.

\subsection{Supporting Multiple Scopes}
\label{sec:policy:scope}
Even with state-of-the-art approaches to \gls{dp} composition, analyzing a large organization's use of \gls{dp} at the global scope (i.e., across all releases) will generally result in a privacy budget that is so high as to be meaningless, even if the budget is adjusted appropriately for the context.
On the other hand, only considering releases in isolation, as in current practice, introduces non-trivial privacy risk due to implicit independence assumptions.
As a result, it may seem that we are fundamentally limited to \emph{either} providing guarantees that provide meaningful privacy parameters (but at a very narrow scope) \emph{or} that consider the cumulative impact at a broad scope (but with excessively high privacy parameters).
While we cannot overcome the fundamental constraints of \gls{dp} for any given scope, we can nevertheless improve our understanding of the privacy risks involved by considering \gls{dp} guarantees at multiple scopes.
For example, Kifer et al. consider a set of ad-hoc scopes in their ex-post case study on the 2020 Census redistricting data release in an attempt to provide more nuanced guarantees on specific (categories of) privacy-sensitive data~\cite{Kifer2022-censussemantics}.
In \oursystem, we systematize this approach and provide the ability to consider a large number of scopes simultaneously,
as well as the ability to define them in a structured and automated way.

\fakeparagraph{Scope-aware Rules}
In \oursystem, we can represent scopes via (intermediate) \glspl{rule} that select only the mechanisms belonging to that scope, using predicates over the labels of the mechanisms.
This can support arbitrary custom scopes an organization requires, e.g., scopes for specific analysts or purposes.
Beyond such organization-specific scopes, scopes in \oursystem can be used to track guarantees on specific (categories of) sensitive data as in Kifer et al.~\cite{Kifer2022-censussemantics}.
Towards this, we require each mechanism to be labeled with the data attributes it accesses.
This information should already be available, as it is required to perform a meaningful \gls{dp} analysis.
At the most basic level, such scopes can establish per-attribute privacy budgets.
Let $A = \{a_1, a_2, \ldots \}$ denote the attributes of the database.
For each attribute $a_k$, we can establish an (intermediate) \gls{rule} $(\phi_k, U_k, B_k)$, where the predicate $\phi_k(m)$ selects only mechanisms that use the attribute $a_k$.
We require that the composition of the selected mechanism remains below the budget $B_k$ (for privacy unit $U_k$).
This can be seen as a form of attribute \gls{dp}~\cite{Kifer2011-dpnolunch}, where the privacy unit between neighboring databases is re-defined to be any attribute.
While attribute \gls{dp} is usually used with bounded-\gls{dp},
modeling such \glspl{rule} as scope (instead of as unit) allows for such attribute constraints in both bounded- and unbounded-\gls{dp}.\footnote{\cfref{sec:bg} for the difference between bounded- and unbounded~\gls{dp}.}

\fakeparagraph{Per-Attribute Policies}
While it would be possible to define individual per-attribute budgets, an organization will typically have a significant number of attributes, and it would be neither scalable nor meaningful to assign each attribute an individual privacy budget.
Alternatively, existing work on attribute \gls{dp} typically assumes the same budget for every attribute~\cite{Kifer2011-dpnolunch, Ghazi2022-dpcontrol}, which may be too limiting.
In \oursystem, we instead introduce privacy risk levels that classify attributes in the data schema based on their sensitivity.
Each risk level $r$ (e.g., \textit{low}, \textit{medium}, and \textit{high}) is associated with a budget $B_{r}$.
By default, attributes receive their budget based on risk levels.
However, \glspl{admin} can alternatively also provide custom budgets for specific attributes.
A per-attribute \gls{policy} (c.f. \Cref{fig:policy:perattr}) defines a per-attribute budget of $B_{r}$ for a privacy risk level $r$ and refers to the set of attributes $A_{r} = \{a_1, a_2, \ldots\}$ with this risk level.
For each per-attribute \gls{policy}, \oursystem automatically derives the per-attribute intermediate \gls{rule} set $\hat{R}_{A}$, which contains a \gls{rule} per attribute.
The subset of \glspl{rule} $\hat{R}_{A_r} \subseteq \hat{R}_{A}$, corresponding to attributes at the same risk level $r$, all share the same budget:
\begin{equation*}
    \hat{R}_{A_r} = \{(\phi_{1}, U_1, B_{r}), (\phi_2, U_2, B_{r}),  \ldots,  (\phi_{|A_r|}, U_{|A_r|}, B_{r})\}
\end{equation*}

\fakeparagraph{Attribute Category Policies}
Per-attribute scopes by themselves are limited in their expressiveness, as they do not model the dependencies (or lack thereof) between different data attributes.
In \oursystem, we use \emph{categories} to make these traditionally implicit independence assumptions explicit.
Specific \glspl{category} might include biometrics, financial, health, personal, demographics, activity, location, etc. but will naturally be organization-specific.
Similar to the attribute privacy risk levels, our system offers the ability to use \gls{category} risk levels to remove the requirement of defining a budget for every \gls{category} individually.
An attribute might naturally belong to multiple categories,
but even then, cleanly assigning attributes into \glspl{category} may often be difficult due to edge cases, where an attribute is \emph{related} to a \gls{category} but not strictly a member.
To address these edge cases, \oursystem supports multiple \glspl{level}, for example: \textit{member}, \textit{strong connection}, and \textit{weak connection}.
Each attribute can be either a direct member of a \gls{category}, have a \textit{strong connection} or \textit{weak connection}, or be assumed to be independent.

\oursystem translates each category into separate cumulative (intermediate\footnote{These rules are further extended by context policies.}) \glspl{rule}: \emph{(i)} a \gls{rule} for members, \emph{(ii)} a \gls{rule} for members and attributes with a strong connection, and \emph{(iii)} a \gls{rule} for attributes with any connection.
The \gls{category} \gls{rule} predicate $\phi(m)$ for the \gls{level} \textit{strong connection}, for example, selects any mechanism relying on an attribute that is a member of the \gls{category}, or has a strong connection to the \gls{category}.
While it is possible to configure a budget for each \gls{category} risk level, and for each \gls{level}, \oursystem applies the budget extension functions introduced for context (\cfref{sec:policy:ctx}), also for extending the budget from the member \gls{level} to the strong-connection and weak-connection \glspl{level}.
This allows an \gls{admin} to explicitly encode their assumptions about when cumulative privacy loss tracking is necessary, where higher privacy loss can be tolerated given weaker dependencies, and when independence assumptions are justified as attributes are unrelated.

A \gls{category} \gls{policy} (c.f. \Cref{fig:policy:percat}) defines per-category budgets $B_r$ based on the category risk level $r$. %
Let $C_r = \{c_1, c_2, \ldots \}$ denote the \glspl{category} with \gls{category} risk level $r$, and let $L = \{l_1, l_2, l_3\}$ denote the \glspl{level}.
A \gls{rule} for the level $l_j$ of category $c_i$ is a tuple $(\phi_{ij}, U_i, f_j)$ consisting of a predicate over the mechanisms $\phi_{ij}: \mathcal{M} \rightarrow \{0,1\}$, and a function $f_j: B \rightarrow B$ on privacy budgets for unit $U_i$.
\oursystem derives the (intermediate) \gls{rule} set $\hat{R}_{C}$,  which contains $|L|$ \glspl{rule} per category.
The subset of \glspl{rule} $\hat{R}_{C_r} \subseteq \hat{R}_{C}$ corresponding to \glspl{category} of the same risk level $r$, all share the same budget $f_j(B_r)$ per membership \gls{level} $j$:
\begin{equation*}
R_{C_r} = \left\{
\begin{aligned}
    &  && (\phi_{11}, B_r), \; (\phi_{12}, f_2(B_r)), \; (\phi_{13}, f_3(B_r)), & \\
    & && (\phi_{21}, B_r), \; (\phi_{22}, f_2(B_r)), \; (\phi_{23}, f_3(B_r)), & \\
    & && \qquad\qquad\qquad\quad \vdots &
\end{aligned}
\right\}~\footnote{With a slight abuse of notation, we abbreviate the identity function for the member level as $B_r = f_{id}(B_r)$, and omit the unit $U_i$ for conciseness.}
\end{equation*}
Note that per-attribute and per-category \glspl{policy} can be seamlessly combined with context \glspl{policy}, as they simply define new intermediate rules, and the extension \glspl{policy} introduced in \Cref{sec:policy:ctx} apply to the union of all intermediate \glspl{rule} defined by the base \glspl{policy}.

\subsection{Multiple Privacy Units}
\label{sec:policy:unit}

So far, we have mostly omitted the privacy unit from our considerations and treated scope- and context \glspl{policy} and the \glspl{rule} they generate as if they were based on a single, universally used privacy unit.
In real-world applications, however, data releases often use different privacy units~\cite{Desfontaines2022-dppractice}.
In \oursystem, we can quantify and control the privacy guarantee associated with multiple privacy units simultaneously,
which \gls{dp} experts consider essential for future use of \gls{dp}~\cite{Cummings2023-dpfrontier}.
Specifically, \oursystem supports arbitrary privacy units as long as each mechanism $m \in \mathcal{M}$ can calculate its privacy cost across all privacy units used in the system.
Typically, user-level privacy units are relaxed across three dimensions (\cfref{sec:bg}):
\emph{(i)} attribute-based privacy units, %
\emph{(ii)} release-specific privacy units, %
and \emph{(iii)} time-based privacy units. %
In \oursystem, we can express the semantics of attribute-based privacy units using scope \glspl{policy}.\footnote{The advantage of scope \glspl{policy} is that it naturally allows per-attribute (group) restrictions for bounded and unbounded \gls{dp} (\cfref{sec:bg}).}
Release-specific privacy units inherently lack meaning across releases and can, therefore, only be used to report guarantees of individual releases.
In the following, we focus specifically on time-based privacy units, which are frequently used in real-world \gls{dp} deployments.
Time-based privacy units are based on the assumption that it is difficult to link users across time boundaries (e.g., that a user's data distribution changes significantly over time).
They require that each record has a user ID and a timestamp convertible to the desired time unit, which (in combination) serves as the record's privacy ID.
While timestamps are inherent in data streams, static data (e.g., country of residence) generally lacks meaningful timestamps.
However, it is common practice to treat static data as if it were part of the data stream for each time unit. %
For example, for page view data, each record contains the geo-location and the browser, even though the location and browser do not change with every visit of a page.
However, the independence assumption of the privacy unit will not hold for such static data, and so the \gls{dp} guarantee relative to this privacy unit does not provide additional benefits.

\fakeparagraph{Converting between Privacy Units}
In practice, it might be hard to judge for which (if any) time-based privacy unit the independence assumptions hold,
e.g., at which time scales the data distribution of all users incurs sufficient changes between two units to be considered independent.
However, \gls{dp} guarantees for time-based privacy units can be converted to guarantees for different time scales, allowing us to understand how these guarantees change depending on which assumptions we are willing to make.
By group privacy (\cfref{sec:bg}), a privacy parameter for a privacy unit $u_{src}$ implies a privacy parameter for a privacy unit $u_{dst}$ if $u_{dst}$ can be seen as a combination of a group of $k$ instances of $u_{src}$.
For example, converting from user-day to user-month requires applying group privacy with a group size of 31.
Note that we can technically also convert to smaller time scales via group privacy, e.g., a user-month guarantee implies the same (group size one) privacy parameter for the user-day privacy unit.
Converting from user-month to user-week, however, would require a group size of two, as a week might span two months.
Note that the conversion using group privacy provides only an upper bound on the privacy loss, while the loss may be much lower concretely.
For example, Wikimedia currently releases daily page view statistics with $\rho_d = 0.015$, equivalent to $(\epsilon_d=0.8, \delta=10^{-6})$, using a UserDay privacy unit.~\footnote{We convert the \gls{zcdp} guarantees to \gls{adp} using a (non-tight) conversion~\cite{Desfontaines2024-dpconverter} for a fixed $\delta=10^{-6}$ to aid intuitive understanding.}
Under group privacy, this daily guarantee implies a UserMonth privacy loss of $\rho_m = 31^2 \cdot \rho_d = 14.415$, or $(\epsilon_m=41.94, \delta=10^{-6})$, which may be excessive and overly pessimistic,
as group privacy must assume users view Wikimedia to the same extent every single day.
However, user activity often fluctuates, with high activity on some days and lower activity on others~\cite{Wikipedia2024-siteviews}.
As a result, one would expect to be able to achieve much better guarantees than what can be achieved purely via group privacy.

\fakeparagraph{Quantifying Privacy for Multiple Privacy Units}
Improving upon group-privacy-based bounds requires adapting mechanisms to consider multiple privacy units simultaneously.
For example, if, in addition to the contribution bound of 10 visits per user and day considered in the Wikimedia release~\cite{Adeleye2023-dpwikipedia}, we also bound the monthly visits to 70 we can provide a UserMonth guarantee of $\rho_M=0.735$, e.g., $(\epsilon_{m}=6.72, \delta=10^{-6})$, significantly improving on the group-privacy-based bound without meaningfully impacting utility.
In Appendix \ref{sec:appendix:units}, we discuss how the programming framework~\cite{Hay2020-opendpprog} underlying the state-of-the-art \gls{dp} libraries~\cite{Berghel2022-tumult, OpenDP2020-whitepaper} can be generalized to support multiple privacy units to achieve these improved bounds.

\oursystem leverages this approach to enable policies to constrain budgets for small and large privacy units simultaneously,
which is essential for fine-grained privacy risk management.
For example, only considering small privacy units does not provide a useful understanding of how guarantees are affected should the underlying assumptions fall short.
On the other hand, only considering larger privacy units is also insufficient:
releases might concentrate the comparatively high privacy-loss parameters necessary to express guarantees at this level into a small time scale, violating privacy expectations.
By considering both small and large guarantees simultaneously, \oursystem provides a significantly more nuanced approach to privacy units.

\subsecspacingtop
\section{Policy Enforcement}
\subsecspacingbot
\label{sec:enforce}

In this section, we consider the enforcement of a \gls{policy} set expressed in the \gls{policy} language for organization-wide privacy risk management introduced in \Cref{sec:policy}.
In \oursystem, a \gls{policy} set defines a (potentially very large) number of scope- and context-aware \glspl{rule},
and the system needs to check that no \gls{rule} is violated for any series of releases.
We first present an optimization that reduces the size of this \gls{rule} set while maintaining the privacy semantics,
generically reducing the problem size for any enforcement algorithm.
Afterward, we describe how \oursystem efficiently enforces complex \glspl{policy} even while providing a fine-grained privacy analysis.
Finally, we discuss how \oursystem can be employed to manage the privacy risk of complex one-off releases that include multiple mechanisms (e.g., queries), or be integrated with existing privacy management systems that allocate limited budgets among \gls{dp} applications.

\subsection{Rule Pruning}
\label{sec:enf:prune}
The challenge for \gls{policy} enforcement is that it naturally scales with the size of the \gls{rule} set.
Even without considering per-release \glspl{policy}, the set of \glspl{rule} $R$ derived from the \glspl{policy} can become very large.
\begin{equation}
    \label{eq:ruleset}
    R = \{(\phi_1, U_1, B_1), \; (\phi_2, U_2, B_2), \; \ldots \}
\end{equation}
Specifically, if the \gls{policy} set tracks: \emph{(i)} $|U_{nit}|$ privacy units (\cfref{sec:policy:unit}), \emph{(ii)} separate scopes for $|A_{ttr}|$ attributes, $|C_{at}|$ categories each with $|L_{vl}|$ membership levels, and potentially a global scope~(\cfref{sec:policy:scope}), and \emph{(iii)}  in total $|E_{xt}| = |E_1| * |E_2| * \ldots$ extensions (\cfref{sec:policy:ctx}) corresponding to the product of the number of settings from each extension, the \oursystem \gls{rule} set has size:
\begin{equation*}
|R| = |U_{nit}| * (1 + |A_{ttr}| + |C_{at}| * |L_{vl}|) * |E_{xt}|
\end{equation*}

Since \oursystem needs to check each \gls{rule} for each release, the large number of \glspl{rule} could become a scalability problem when deploying \oursystem.
However, there is a significant optimization potential to reduce the size of the \gls{rule} set by considering that the privacy guarantee of a \gls{rule} might already be implied by another \gls{rule}.
We identify such \glspl{rule} and prune them from the \gls{rule} set, as they cannot become a deciding factor in a \gls{policy} decision.
For example, it is easy to see if the \gls{rule} set contains two \glspl{rule} $(\phi_1, U_1, B_1)$ and $(\phi_2, U_2, B_2)$ that have the same privacy unit and equivalent predicates, it is sufficient to track only the more restrictive \gls{rule} with the smaller budget and prune the other.
More generally, we can define a partial order on \glspl{rule} that allows us to formally describe which rules can be pruned safely.

\fakeparagraph{Formal Definition}
The predicate set $\Phi$ forms a partially ordered set $(\Phi; \sqsubseteq)$, with the partial order relation $\sqsubseteq$, where $\forall \phi_1 \in \Phi, \; \forall  \phi_2 \in \Phi$:
\begin{equation*}
\phi_1 \sqsubseteq \phi_2\iff\{m \in \mathcal{M} | \;\phi_1(m) \} \subseteq \{m \in \mathcal{M} | \; \phi_2(m) \}
\end{equation*}
Similarly, a privacy unit set $U$ also forms a partially ordered set $(U; \leq)$, $u_1 \leq u_2$ if privacy unit $u_2$ covers $u_1$.
For example, for the set $U = \{UserWeek, UserMonth, User \}$, the partial order would be $UserMonth \leq User$ and $UserWeek \leq User$, but there is no order between $UserWeek$ and $UserMonth$, as a week can span two months.
By combining these two partial orders, the set of rules $R$ forms a partially ordered set $(R; \preceq)$, with the partial order relation $\preceq$, where $\forall r_1 = (\phi_1, U_1, B_1) \in R, \; \forall  r_2=(\phi_2, U_2, B_2) \in R$:
\begin{equation}
\label{eq:rulerelation}
r_1 \preceq r_2\iff \phi_1 \sqsubseteq \phi_2 \; \land \; U_1 \leq U_2
\end{equation}
If the base \glspl{policy} define a global scope, and each extension \gls{policy} contains an extension with $\phi_* \mapsto 1$, and the privacy units have a maximal element, e.g., User, then the  $(R; \preceq)$ is a bounded join-semi-lattice, where $r_* = (\phi_*, u_*, B_*)$ is the greatest element.

\begin{definition}
\label{def:nonconstrain}
In a \gls{rule} set $R$, a \gls{rule} $r$ is called non-constraining if there exists no composition of mechanisms $M = (m_1, m_2, \ldots, m_N)$ that satisfies all \glspl{rule} in $R \setminus \{r\}$ but violates \gls{rule} $r$.
\end{definition}
\noindent
Due to the structure of the \gls{rule} set, we can determine whether a \gls{rule} is non-constraining purely by considering the partial order relation and budgets.
\begin{theorem}%
\label{thr:cover}
In a \gls{rule} poset $(R, \preceq)$, a rule $r_i = (\phi_i, u_i, B_i) \in R$ is non-constraining if
$\exists r_j \in R \setminus \{r_i\}$ such that $r_i \preceq r_j$ and $B_i \geq B_j$.
\end{theorem}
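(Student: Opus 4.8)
The plan is to prove the stronger pairwise fact that the single witness $r_j$ already implies $r_i$, and then invoke it inside the definition of non-constraining. Concretely, I would show that any composition $M = (m_1, \ldots, m_N)$ whose sub-composition matching $\phi_j$ respects $B_j$ relative to $u_j$ automatically has its sub-composition matching $\phi_i$ respect $B_i$ relative to $u_i$. Since $r_j \in R \setminus \{r_i\}$ with $r_j \neq r_i$, every composition that satisfies all of $R \setminus \{r_i\}$ satisfies $r_j$ in particular, hence satisfies $r_i$; by \Cref{def:nonconstrain} no composition can satisfy $R \setminus \{r_i\}$ while violating $r_i$, so $r_i$ is non-constraining. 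The entire argument therefore reduces to a pointwise implication between two $\preceq$-comparable rules, which I would establish by decomposing $r_i \preceq r_j$ into its two constituents from \Cref{eq:rulerelation} and treating each by a monotonicity argument.

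First I would fix such an $M$ and write $M_i$, $M_j$ for the sub-compositions selected by $\phi_i$, $\phi_j$. The scope component $\phi_i \sqsubseteq \phi_j$ unfolds to $\{m : \phi_i(m)\} \subseteq \{m : \phi_j(m)\}$, so $M_i$ is a sub-composition of $M_j$. The key structural property I would rely on is that every composition operator \oursystem supports (sequential and parallel composition, and their advanced/\gls{rdp}/\gls{zcdp} refinements) is monotone under adding mechanisms: removing mechanisms cannot increase the cumulative privacy cost. Hence, measured against any fixed privacy unit, the cost of $M_i$ is at most that of $M_j$.

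Next I would handle the unit component $u_i \leq u_j$. Because $u_j$ covers $u_i$, any two databases that are $u_i$-neighbors differ within a single $u_j$-unit and are therefore also $u_j$-neighbors; equivalently, by group privacy at group size one (\cfref{sec:policy:unit}), a guarantee stated for the covering unit $u_j$ transfers without loss to the finer unit $u_i$ at the same parameters. Applying this transfer to $M_i$ and then chaining in scope monotonicity gives
\begin{equation*}
\mathrm{cost}_{u_i}(M_i) \;\leq\; \mathrm{cost}_{u_j}(M_i) \;\leq\; \mathrm{cost}_{u_j}(M_j) \;\leq\; B_j \;\leq\; B_i ,
\end{equation*}
where the first inequality is the unit transfer, the second is scope monotonicity against the fixed unit $u_j$, the third is that $M$ satisfies $r_j$, and the last is the hypothesis $B_i \geq B_j$. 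Thus $M_i$ respects $B_i$ relative to $u_i$, i.e. $M$ satisfies $r_i$, completing the pairwise implication.

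The step I expect to require the most care is the unit transfer together with making the budget comparison $B_i \geq B_j$ well-typed: the two rules may express their budgets in different \gls{dp} variants, so I would make explicit that both are compared in a common form (e.g. after converting to a shared variant) and that the group-size-one transfer preserves this comparison across the chain above. The scope part is essentially routine once the composition operator is fixed, so I would isolate monotonicity of composition as the single structural assumption the proof invokes, ensuring the result holds uniformly across the \gls{dp} variants used in \oursystem.
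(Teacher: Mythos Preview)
Your proposal is correct and takes essentially the same approach as the paper: both arguments unpack $r_i \preceq r_j$ into scope inclusion and unit ordering, use monotonicity of composition together with the group-size-one unit transfer to obtain a chain of cost inequalities, and close with $B_j \leq B_i$. The only cosmetic differences are that the paper phrases the argument by contradiction and folds your two monotonicity steps into a single inequality $c_j \geq c_i$, whereas you give a direct proof and are more explicit about the budget comparison being well-typed across \gls{dp} variants.
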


\begin{proof}
  See Appendix~\ref{sec:appendix:coverproof}.
\phantom\qedhere
\end{proof}

\begin{figure}[t]
\centering
\begin{tikzpicture}

    \tikzstyle{graynode}=[gray]
    \tikzstyle{boldedge}=[line width=2pt]

    \node (1) at (0,3) {$(\phi_1, u_1,\epsilon=7)$};

    \node[graynode] (2) at (-2.5,1.5) {$(\phi_2, u_2, \epsilon=7)$};
    \node[graynode] (3) at (0,1.5) {$(\phi_3, u_3, \epsilon=7)$};
    \node (4) at (2.5,1.5) {$(\phi_4, u_4, \epsilon=5)$};

    \node[graynode] (5) at (-2.5,0) {$(\phi_5, u_5, \epsilon=7)$};
    \node (6) at (0,0) {$(\phi_6, u_6, \epsilon=3)$};
    \node[graynode] (7) at (2.5,0) {$(\phi_7, u_7, \epsilon=5)$};

    \draw[boldedge] (1) -- (2);
    \draw[boldedge] (1) -- (3);
    \draw (1) -- (4);

    \draw[boldedge] (2) -- (5);
    \draw (2) -- (6);
    \draw (3) -- (6);
    \draw[boldedge] (4) -- (7);
\end{tikzpicture}
\caption[Hasse diagram of a \gls{rule} poset $(R, \preceq)$.]{Hasse diagram of a \gls{rule} poset $(R, \preceq)$: non-constraining \glspl{rule} are shown in grey, with the path to the more general \gls{rule} under a stricter budget highlighted in bold.
}
\label{fig:hasse}
\end{figure}
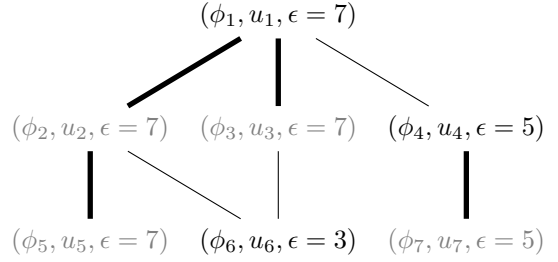

In \Cref{fig:hasse}, we show an example of a \gls{rule} poset, distinguishing between constraining and non-constraining \glspl{rule}.
\Cref{thr:cover} lies the groundwork for an algorithm that sequentially identifies a \gls{rule} that is non-constraining and thus can be pruned, repeating this until there are no non-constraining \glspl{rule} left.
However, implementing this algorithm efficiently hinges on  evaluating $r_1 \preceq r_2$ efficiently, which requires computing $\phi_1 \sqsubseteq \phi_2$ and $U_1 \leq U_2$ (Eq. \ref{eq:rulerelation}).
While the partial order for the privacy units can be explicitly encoded by the \gls{admin} as the set is generally small,
determining the partial order between the predicates is more involved.
Determining whether $\phi_1 \sqsubseteq \phi_2$ on arbitrary predicates is a hard problem as it requires showing that $\phi_1 \land \neg \phi_2$ is not satisfiable.

\fakeparagraph{\Gls{rule} Decomposition}
In \oursystem, we can exploit the fact that the vast majority of \glspl{rule} are generated from extension \glspl{policy} on base \glspl{policy} (c.f. \Cref{eq:rulepreddecomp}) to significantly simplify this computation.
This allows decomposing the problem of determining $\phi_{1} \sqsubseteq \phi_{2}$ for such \glspl{rule}.
Specifically, the poset $(\Phi; \sqsubseteq)$ over these \glspl{rule} can be decomposed into a poset $(\Phi^{\ell + 1}; \leq)$ defined on $\Phi_{B} \times \Phi_{1} \times  \ldots \times  \Phi_{\ell}$, where:
\begin{equation*}
    \begin{aligned}
    & (\phi_{B}^{(1)}, \phi_{1}^{(1)}, \ldots, \phi_{\ell}^{(1)}) \;\leq \; (\phi_{B}^{(2)}, \phi_{1}^{(2)}, \ldots, \phi_{\ell}^{(2)})
    \\
    & \iff
    \phi_{B}^{(1)} \sqsubseteq \phi_{B}^{(2)} \land \phi_{1}^{(1)} \sqsubseteq \phi_{1}^{(2)} \land \;\ldots \; \land \phi_{\ell}^{(1)} \sqsubseteq \phi_{\ell}^{(2)}
    \end{aligned}
\end{equation*}

\noindent
This decomposition allows us to significantly reduce the number of predicates we need to compare by applying basic memoization.
Let $\hat{R}$ denote the set of intermediate \glspl{rule} generated by the base \glspl{policy} (\cfref{sec:policy:scope}), and let $E_i$ denote the set of extensions from the $i$-th extension \gls{policy} (\cfref{sec:policy:ctx}).
Instead of computing the predicate relation $\phi_1 \sqsubseteq \phi_2$ for each pair of \glspl{rule}, which would require $\mathcal{O}\left(\left(|\hat{R}| * |E_1| * \ldots * |E_\ell|\right)^2\right)$ comparisons, this decomposition reduces the number of comparisons to $\mathcal{O}\left(|\hat{R}|^2 + |E_1|^2 + \ldots + |E_\ell|^2\right)$.
Beyond this asymptotic advantage, decomposing the predicates in this way also makes it straightforward to compute most of the individual partial order relations.
Each extension \gls{policy} typically defines only a small set of extensions (e.g., a ``standard'' and a ``black-box'' \gls{ml} setting), and \glspl{admin} can easily
annotate each extension to define the partial order.
\footnote{An annotation is an integer tuple such that one extension is smaller or equal to another if each integer in its tuple is smaller than or equal to the corresponding integer in the other's tuple.}
In contrast to extension \glspl{policy}, base \glspl{policy} generate a large intermediate \gls{rule} set defining different scopes.
To compute $\phi_{B}^{(1)} \sqsubseteq \phi_{B}^{(2)}$, requires a partial order on the predicates of these base \glspl{rule}.
However, both per-attribute \glspl{policy} and attribute category \glspl{policy}, which are responsible for the large majority of predicates, generate predicates where each one simply specifies a set of attributes to which its scope applies.
Thus, we can use the subset relation on these attribute sets to establish the partial order.
For additional custom scoping \glspl{policy}, we again rely on the \gls{admin} to annotate the order directly (potentially using the help of a SAT solver if necessary).

\begin{algorithm}[t]
\begin{algorithmic}[1]
\Function{RuleSetPruning}{rules}
    \State $\top$ $\leftarrow$ \Call{greatest}{rules}
    \State \Call{DeactivateCoveredRules}{$\top$, $\top$.budget}
\EndFunction
\State
\Function{DeactivateCoveredRules}{rule, budget}
    \For{crule \textbf{in} \Call{LowerCover}{rule}}
        \If{budget $\leq$ crule.budget}
            \State \Call{DeactivateRule}{crule}
            \State cbudget  $\leftarrow$ budget \Comment{implied budget}
        \Else
            \State cbudget $\leftarrow$ crule.budget
        \EndIf
        \State \Call{DeactivateCoveredRules}{crule, cbudget}
    \EndFor
\EndFunction
\end{algorithmic}
\caption{The \oursystem \gls{rule} set pruning algorithm.}
\label{algo:prune}
\end{algorithm}

\fakeparagraph{Pruning Algorithm}
In \Cref{algo:prune}, we introduce the \oursystem \gls{rule} set pruning algorithm, designed to identify and deactivate \glspl{rule} that can be safely removed according to \Cref{thr:cover}.
The algorithm takes as input a partially ordered set of \glspl{rule}  $(R; \preceq)$ and starts at the greatest element.\footnote{Note that by including the global scope $\phi_* \mapsto 1$ as a custom base \gls{policy}, and tracking a greatest privacy unit (e.g., user), the poset forms a bounded join semi-lattice, which guarantees a greatest element. Otherwise, a special $\infty$ element can be added to ensure a maximal element exists.}
The algorithm proceeds recursively to smaller \glspl{rule}, pruning those that do not constrain the composition further while also updating (implied) budgets as it traverses.
For \gls{rule} $r$, we determine its lower cover, i.e.,
find the \glspl{rule} $r_c \in R \setminus \{r\}$, for which no intermediate \gls{rule} $r_i \in R \setminus \{r, r_c\}$ exists such that $r_c \leq r_i \leq r$.
For each of the \glspl{rule} in the lower cover of $r$, we compare whether the \gls{rule} has a stricter budget than the budget of the \gls{rule} $r$, and otherwise prune the \gls{rule}.
When a \gls{rule} is pruned, it means that there is a greater \gls{rule} with a stricter budget that also implies a stricter budget for this \gls{rule} and all its successors; hence, we propagate this implied budget further.
Specifically, for every \gls{rule} $r_c$ in the lower cover, we call the recursive function with the (implied) budget.
The recursion stops when it reaches minimal \glspl{rule}, which have no lower cover.
In this algorithm, a \gls{rule} is only pruned if a more general \gls{rule} has a stricter budget, and hence, according to \Cref{thr:cover}, can be safely removed.
By propagating the (implied) budget downward, also \glspl{rule} are pruned where the covering \gls{rule} is not a direct predecessor.

\newglossaryentry{monitor}{name={policy decision point}, plural={policy decision points}, description={}}

\subsection{Policy Decision Point}
\label{sec:enf:monitor}

In this section, we discuss the design of a \gls{monitor} for enforcing a \gls{policy} set.
The \gls{monitor} determines whether to allow or reject release requests, %
where each release request contains one or more labeled \gls{dp} mechanisms and the multiset of labels for each mechanism defines its context and associated scopes (\cfref{sec:policy}).
The \gls{monitor} proceeds in two stages, first checking each mechanism's compliance with the configured per-release \glspl{policy}.
If any per-release \gls{rule} is violated, the entire release request is denied.
The \gls{monitor}'s second stage checks the \glspl{policy} that take into account the cumulative impact of releases.
In \gls{dp} terminology, this operates akin to a privacy filter, which, given a sequence of adaptively chosen mechanisms, ensures that a pre-specified budget for privacy parameters is not exceeded~\cite{Rogers2016-dpodometer, Lecuyer2021-dpodometers, Haney2023-dpfadaptconc}.
For the remainder of the section, we focus on the second stage of the \gls{monitor}, as enforcing per-release \glspl{policy} is straightforward.

\fakeparagraph{\Glspl{rule} Checking Engine}
While the \gls{rule} set optimization discussed in \Cref{sec:enf:prune} can significantly prune the \gls{rule} set, a considerable number of \glspl{rule} may remain to be checked for every release request.
Determining whether a given \gls{rule} applies to a mechanism requires evaluating the \gls{rule}’s predicate on the mechanism’s labels.
Instead of performing a linear scan through all \glspl{rule}, we can exploit the fact that the \gls{rule} set forms a partially ordered set $(R; \preceq)$ (\cfref{sec:enf:prune}).
The algorithm starts at the maximal \glspl{rule} in the poset
and proceeds to smaller rules, evaluating the predicate when visiting a \gls{rule} for the first time.
If the predicate is true, it checks the \gls{rule} for a potential violation and continues to the \glspl{rule} in its lower cover.
If the predicate is false, we can skip not only that \gls{rule} but also all smaller \glspl{rule} in the poset.
This algorithm is sufficient to integrate \oursystem with simple budget tracking systems where checking release requests only requires comparing the privacy cost of the mechanism with the remaining budget.
For example, this approach is used in Tumult Analytics~\cite{Berghel2022-tumult}, and we could easily configure the session with a \oursystem \gls{policy} set instead of a global privacy budget.
However, this simple budget-tracking approach does not allow for automatically applying parallel composition \emph{across} queries.

Using a more fine-grained accounting approach based on block composition~\cite{Lecuyer2019-sage} can leverage cross-query data access patterns to achieve a tighter privacy analysis.
For example, if two queries each access only individuals from two different countries, then block composition allows the costs to be tracked in two separate blocks, one for each of the two countries.
More generally, a set of \glspl{pa} can be defined that partition users (or other privacy units) into blocks.
If we track privacy loss for each possible block\footnote{Note that we need to consider the full domain of \glspl{pa}, as data access patterns are generally not known in advance.} separately, we can fully leverage the potential of parallel composition this introduces.
However, this straightforward approach scales poorly, especially so when considering that, in \oursystem, we would need to track a full set of blocks for each \gls{rule}.
Instead, we build upon an optimized tracking approach proposed by K\"uchler et al.~\cite{Kuchler2024-cohere}, which allows dynamically adjusting the granularity of the block tracking based on the concrete set of mechanisms and their data access patterns.
We extend this idea to include \glspl{rule} as another dimension alongside the domain of \glspl{pa}, applying the proposed segmentation algorithm on the combined space.
Rather than tracking a large set of individual blocks for each \gls{rule}, we can track a small number of segments that can span multiple blocks across \glspl{rule}, significantly reducing the total number of privacy filters we need to consider.
In the worst-case, the total number of segments to track remains unchanged, however, in practice, we anticipate notable reductions by integrating \glspl{rule} and \glspl{pa}, as requests with similar data-access patterns can be expected to match similar \glspl{rule}.

While each \gls{rule} can, in principle, define its own unique set of \glspl{pa}, this level of flexibility is often unnecessary as typical \glspl{pa} (e.g., country of origin), are not \gls{rule}-specific.
However, for \glspl{rule} with a time-based privacy unit (\cfref{sec:policy:unit}), integrating the time steps as a \gls{pa} can be beneficial.
By considering time steps as a \gls{pa}, we can also leverage parallel composition on time steps across queries.
For composition efficiency, ideally, we would track each time step individually; however, this poses a challenge for tracking because time steps are an infinite sequence.
To balance tracking complexity with composition efficiency, we track specific (e.g., recent) time steps in a more granular level and collapse more distant time steps into broader intervals.
For example, let $t_1, t_2, \ldots, t_\infty$ denote the sequence of time steps in a time-based privacy unit (e.g., months).
We might represent this as $[t_1, t_{k-1}], t_{k}, t_{k+1}, \ldots, t_p, [t_{p+1}, t_\infty]$.~\footnote{The inner time steps can also be collapsed into intervals if desired.}
This exploits the natural tendency of requests to consider ``current'' time steps more selectively, while requests for ``historical'' time steps are more likely to span across larger intervals already, in which case there is no loss of tracking precision.

\subsection{System Integration}
\oursystem can be employed to manage the privacy risk of complex one-off releases that include multiple mechanisms (e.g., queries), or be integrated with existing privacy management systems to support ongoing data releases over time.
Several systems already provide authorization workflows for \gls{dp} releases~\cite{Aymon2024-lomas, OpenMinded2024-pysyft}, requiring a data owner or privacy officer to approve or deny analysts' release requests.
\oursystem enhances these workflows by enabling the control and monitoring of privacy risks across multiple releases. %
Note that privacy risk management inherently depends on high-quality data and request annotations.
In \oursystem we assume mechanisms are labeled with the correct privacy loss, scope, and context, however, hardening the approval process is an interesting direction for future work.
For example, combinations of view-based access control and data management systems with support for fine-grained logical partitioning of the database~\cite{Albab2023-k9db} may be promising in addressing this gap.
Recent work has also investigated advanced automated privacy management systems that treat \gls{dp} budgets as resources and frame the decision process as an allocation problem~\cite{Kuchler2024-cohere, Luo2021-privacysched}.
This includes solutions that can prevent budget depletion, even for finite user-level \gls{dp} budgets, through techniques such as user rotation.
\oursystem integrates seamlessly with these approaches, as the constraints imposed by our system can be expressed as a variant of the multidimensional knapsack formulation which also underlie the block-composition-based allocation problem that these solutions operate on.

\fakeparagraph{Policy Management \& Governance}
The effectiveness of \oursystem depends to a large extent on a well-defined policy set.
In general, navigating the privacy-utility tradeoff in \gls{dp} is non-trivial and has been studied extensively~\cite{Miklau2022-dpcasesalary, Avent2019-dpareto}.
However, this challenge already occurs when considering single releases and is, for the most part, orthogonal to our work.
Nevertheless, managing \gls{dp} at an organizational scale does introduce new complexities.
Similar challenges occur in all large-scale policy-based systems, including well-established domains such as access control and firewall configuration.
Just as a misconfigured access control rule can cause data leakages, a misconfigured \gls{dp} policy can result in unintended privacy loss.
These risks can be mitigated through widely established best practices, e.g., incorporating multiple decision points and robust review processes.
As \gls{dp} matures, we anticipate such practices becoming more standardized, as has been the case in other domains.

In an ideal world, an organization would adopt \oursystem before any releases occur, fully defining all \glspl{policy} including contexts and scopes in advance, all data would be properly tagged, and the data schema would remain fixed permanently.
However, in practice, this is unrealistic as organizations frequently have prior releases, and correctly tagging all data at once is a significant challenge.
Instead, \oursystem supports incremental rollouts, including accounting for prior \gls{dp} releases.
\footnote{
\gls{dp} composition theorems are usually formulated for an ideal world. Recent work has shown that most state-of-the art composition theorems also hold for less idealized models~\cite{Haney2023-dpfadaptconc}, and extending this to even more realistic settings remains an interesting avenue for future work.}
An organization can re-analyze prior data releases within \oursystem similar to how a new release would be analyzed, i.e., identifying the relevant attributes, categories, contexts, and defining the privacy units.
The only difference between past and future releases is that past releases have already consumed privacy budget, so they cannot be rejected and their costs must be accounted for.
As more releases are considered, the initial set of \glspl{policy} can be refined, with most updates proceeding smoothly.
However, some changes may lead to \gls{rule} conflicts, where a \gls{rule}'s privacy budget is exceeded based on the previously accepted set of requests.
In this case, \oursystem identifies the \gls{policy} elements at fault, offering a choice of either discarding the change or adjusting the conflicting \glspl{policy}.

More generally, it is always possible to loosen restrictions (e.g., increase privacy budgets, drop attributes from categories, or remove entire categories)
or to introduce new restrictions on attributes, categories, contexts, etc. that have not been used in prior releases.
Conflicts can only occur when attempting to tighten constraints affecting past releases.
Clearly, one cannot undo privacy losses incurred by past releases, yet \gls{policy} adjustments are possible with some caveats.
For example, it is possible to move an attribute to a different category, though this might require increasing that category's budget.
One can also introduce a new context even for past releases, though this may require adding new labels to past releases.
Beyond this, it is even possible to introduce new privacy units, though privacy costs for past releases can only be computed using group privacy, which may be suboptimal.
A more precise analysis would require additional contribution bounds, which generally cannot be introduced retroactively (\cfref{sec:policy:unit}).

\subsecspacingtop
\section{Evaluation}
\subsecspacingbot
\label{sec:eval}
In this section, we evaluate the performance of \oursystem, highlighting how its three core \gls{policy} features effectively mitigate privacy risks that can emerge without comprehensive, organization-wide privacy risk management.

\newcommand{\sRequests}{\ensuremath{\mathcal{R}}\xspace}
\newcommand{\sRequest}{\ensuremath{r}\xspace}
\newcommand{\sAttributes}{\ensuremath{A}\xspace}
\newcommand{\sNumAttributes}{\ensuremath{n}\xspace}
\newcommand{\sNumCategories}{\ensuremath{m}\xspace}
\newcommand{\sNumRequests}{\ensuremath{l}\xspace}
\newcommand{\sNumRequestAttr}{\ensuremath{k}\xspace}

\newcommand{\sa}{\ensuremath{a}\xspace}
\newcommand{\sA}{\ensuremath{A}\xspace}
\newcommand{\sCategories}{\ensuremath{\mathcal{C}}\xspace}
\newcommand{\sC}{\ensuremath{c}\xspace}
\newcommand{\sAssociations}{\ensuremath{\mathcal{L}}\xspace}
\newcommand{\sAssociation}{\ensuremath{l}\xspace}
\newcommand{\mapping}{\ensuremath{S}\xspace}
\newcommand{\catSample}{\ensuremath{\sCategories^\prime}\xspace}
\newcommand{\bernoulli}{\ensuremath{\text{Bernoulli}}\xspace}

\begin{figure*}
    \centering
    \includegraphics[width=1.0\textwidth]{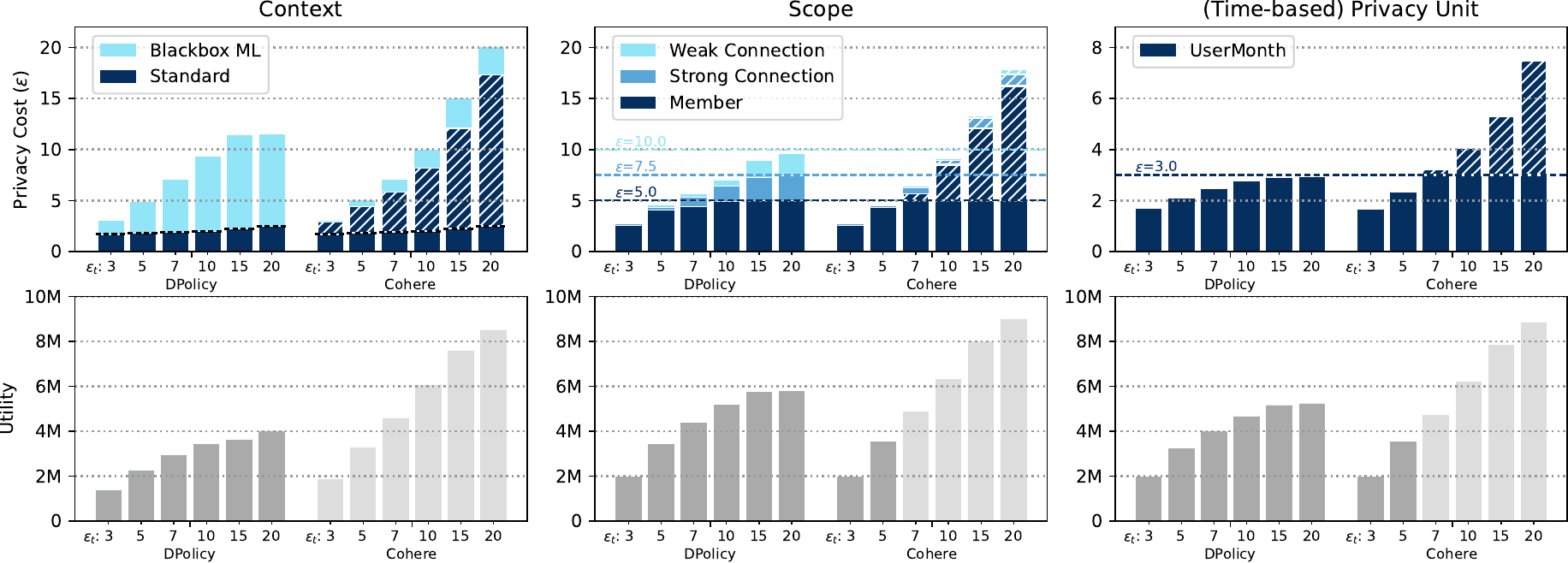}
    \vspace{0em}
    \caption[Privacy cost and utility for \oursystem and Cohere.]{Privacy cost (top) and utility (bottom) for \oursystem and Cohere in our three scenarios.
    For the \gls{trap:context} \gls{trap}, we report the global user-level privacy cost.
    For the  \gls{trap:scope} \gls{trap}, we show the privacy cost of the high-risk category with the largest privacy cost.
    Finally, for the \gls{trap:time} \gls{trap}, we show the user-month privacy cost (for time-based data) for the month with the highest privacy cost.
    We also show the maximum privacy costs acceptable under the \gls{trap}'s policy as a dashed line.
    For Cohere, we indicate privacy costs that violate the \gls{trap}'s \gls{policy} with hatched bars.
    Similarly, we show Cohere's utility in a lighter tone if it was achieved by violating the \gls{trap}'s \gls{policy}.
    }
    \vspace{0.5em}
    \label{fig:dpolicy:eval:ctx}
    \label{fig:dpolicy:eval:scopecat}
    \label{fig:dpolicy:eval:time}
\end{figure*}

\subsection{Evaluation Setup}
We concentrate on a setting where \oursystem is integrated with an advanced privacy management system that treats \gls{dp} budgets as resources requiring careful management, as this represents the most complex deployment scenario.
We compare \oursystem against a state-of-the-art system automated privacy management system, Cohere~\cite{Kuchler2024-cohere}, %
which manages the allocation of privacy resources under a single global user-level budget but does not support more fine-grained \gls{dp} \glspl{policy}.

\fakeparagraph{Implementation}
We implement \oursystem's policy language frontend in Python and use the SageMath library~\cite{sagemath} for policy optimization.
We map the constraints resulting from the policy language to a block-composition-based multidimensional knapsack formulation.
We then solve this formulation using a Rust implementation of the DPK algorithm~\cite{Tholoniat2022-dppacking}.
We also extend Cohere's workload generator with support for sampling the context, attribute, category and time period annotations.
We run \oursystem on a scientific cluster backed by host machines with 128core AMD EPYC CPUs (2.60GHz), with 256 GB memory running Ubuntu 22.04, of which the experiments use 12 virtual cores and 64GB of RAM.
We make all our implementations and benchmarking configurations available as opensource \!\footnote{\url{https://github.com/pps-lab/dpolicy}}.

\fakeparagraph{Evaluation Setting}
While \oursystem is more generally applicable,
for a fair comparison, we evaluate \oursystem in the setting of Cohere~\cite{Kuchler2024-cohere}.
Specifically, we consider unbounded \gls{dp} in the central model, focus on user-level guarantees, and configure privacy budgets in \gls{adp}  with $\delta = 10^{-7}$,  while using \gls{rdp}-based privacy filters\footnote{
We use $\alpha$ orders $\{1.5, 1.75, 2, 2.5, 3, 4, 5, 6, 8, 16, 32, 64, 10^6, 10^{10}\}$.} for composition~\cite{Lecuyer2021-dpodometers}.
We follow Cohere's notion of allocation rounds, simulating 20 weekly rounds in which batches of candidate requests, modeled as a Poisson process with an expected 504 requests per round, compete for privacy resources.
Similarly, we adopt the 12 rounds (i.e., weeks) user and budget unlocking model from Cohere, where, in each round, some users are activated while others are retired, and at least a fixed fraction of the overall privacy budget is guaranteed to be available in each allocation round.

Cohere considers four different workloads modeling different mixes of request.
In our evaluation, we consider their most complex workload (c.f.~\ul{W4:All} with \glspl{pa} in~\cite{Kuchler2024-cohere}) which combines the other three workloads and models the wide variety of different types of \gls{dp} applications that we would expect in a deployment in a large organization.
Specifically, it includes an equal mix of a variety of \gls{dp} mechanisms (Gaussian Mechanism~\cite{Dwork2006-dpgaussian}, Laplace Mechanism~\cite{Dwork2006-originaldp}, Sparse Vector Technique~\cite{Dwork2009-svt}, Randomized Response~\cite{Warner1965-randresp}, DP-SGD~\cite{Abadi2016-mldp}, PATE~\cite{Papernot2018-scalablepate}) and captures variability in privacy requirements across requests by  categorizing requests into three levels of low, medium, and high privacy costs.
In addition, each mechanism category has varying degrees of partitioning, controlling how well requests can leverage parallel composition.
Specifically, Laplace and Gaussian mechanisms represent highly partitioned counting queries.
On the other hand, the sparse vector technique and randomized response mechanisms, which are frequently used in database query tasks, feature less partitioning.
Finally, DP-SGD and PATE represent machine learning tasks with minimal partitioning.
Their workload also assigns each request a utility score, taking into account the request cost and the amount of data accessed.
We refer to Appendix~\ref{apx:eval:config} for a more detailed description of the evaluation setup.

\fakeparagraph{\Glspl{trap}}
We consider three \glspl{trap} that augment Cohere's workload with labels and \glspl{policy}, each highlighting a specific aspect of \oursystem's  privacy risk management. %

\noindent
In \gls{trap:context}, we consider the ``standard'' and ``black-box ML'' settings and define an extension policy (\cfref{sec:policy:ctx}) that relaxes budgets based on the empirical findings of Nasr et al.~\cite{Nasr2021-dpmladv}.
Specifically, we define the budget extensions as
$[ 1.7 \!\mapsto\! 3, 1.8 \!\mapsto\! 5, 1.9 \!\mapsto\! 7, 2.0\!\mapsto\! 10, 2.3\!\mapsto\! 15, 2.5\!\mapsto\! 20]$.
We randomly augment 80\% of the requests belonging to the \gls{ml} mechanisms (i.e., NoisySGD and PATE) with a black-box \gls{ml} label,
while all other requests are labeled as \emph{standard}.

\noindent
In \gls{trap:scope}, we consider ten categories of attributes, in addition to 150 per-attribute scopes.
We set 80\% of the attributes to \gls{risklow}-risk ($\epsilon \leq 20$),  and 10\% each to \gls{riskmedium}-risk ($\epsilon \leq 9$) and \gls{riskhigh}-risk ($\epsilon \leq 3$).
For the categories, we assign half of them budgets of $\epsilon \leq 12$, four a budget of $\epsilon \leq 10$ and one a budget of $\epsilon \leq 5$.
We also define extension functions of $1.5 \cdot b$ for \gls{level:strong}, and $2 \cdot b$ for \gls{level:weak} membership, where $b$ is the \gls{level:member} budget.
In Appendix~\ref{apx:eval:sampling}, we describe how we sample attributes and categories when assigning request labels.

\noindent
In \gls{trap:time}, we consider a user-month privacy unit with a user-month privacy budget of $\epsilon \leq 3$ in addition to the global user-level privacy budget.
Half of the requests select some time-based data from a single month in a window of seven months around the current month.
Requests select the current month with a higher probability ($p = \frac{1}{3}$) than other months ($p = \frac{1}{9}$), modeling a natural tendency for requests to focus on current data.
The other half of the requests selects static data that does not overlap with the time-based requests.

\subsection{Evaluation Results}
We compare \oursystem with Cohere~\cite{Kuchler2024-cohere} on the \glspl{trap} described above, considering a range of total privacy budgets $\epsilon_t \in [3, 5, 7, 10, 15, 20]$ and report the utility and privacy cost of the requests
(c.f. \Cref{fig:dpolicy:eval:ctx}).
The performance of \oursystem (and of Cohere) depends on the complexity of the specific problem instance, with the number of (final) \glspl{rule} and the flexibility afforded by the budgets impacting the allocation complexity.
For the three scenarios we consider, \oursystem finished each weekly  allocation round in less than 15~min. %

\vspace{0.6em}
\fakeparagraph{\gls{trap:context}}
We report the cumulative global user-level privacy cost, differentiating between the cost when considering only ``standard'' requests and the combined cost (including both ``standard'' and ``blackbox ML'' requests).
Naturally, \oursystem enforces the appropriate, tighter, budget on ``standard'' requests (indicated in \Cref{fig:dpolicy:eval:ctx} as a dashed line) while permitting the ``blackbox ML'' requests to use higher budgets.
Cohere, however, lacks this context awareness and allocates ``standard'' requests way beyond this bound, which is clearly inappropriate for this context and would result in significant privacy risk.
Note that instantiating Cohere with the tighter ``standard'' budgets instead would essentially prevent it from allocating ML requests.
On the other hand, \oursystem offers an attractive trade-off between utility and privacy risk management.

\vspace{0.6em}
\fakeparagraph{\gls{trap:scope}}
We report the privacy cost of the high-risk category with the largest privacy cost.
While this is naturally tracked in \oursystem, we compute the equivalent in Cohere by considering the category of allocated requests post-hoc.
In Appendix~\ref{apx:eval:results}, we additionally report results for high-risk attributes.
We indicate the per-category budget ($\epsilon \leq 5$) and its extension to attributes of different membership levels ($\epsilon \leq 7.5$ for strong connections and $\epsilon \leq 10$ for weak connections) with dashed lines in \Cref{fig:dpolicy:eval:scopecat}.
In this \gls{trap}, Cohere trivially satisfies these bounds when the global bound is lower but starts to violate them as the global budget increases,
concentrating undue privacy risks on specific types of data.
\oursystem, on the other hand, performs as expected and uses the increased budget in line with the intended privacy semantics.
This allows our system to achieve considerably higher utility than what Cohere (with $\epsilon \leq 5$) can achieve.

\vspace{0.6em}
\fakeparagraph{\gls{trap:time}}
We report the privacy cost for the user-month privacy unit, which has an intended budget of $\epsilon \leq 3$ (indicated by a dashed line in \Cref{fig:dpolicy:eval:time}).
We omit reporting the (user-level) privacy cost of the non-time-based requests, as no additional constraints are acting on them beyond the global budget.
As a result, both Cohere and \oursystem trivially satisfy the user-level constraints.
For the user-month privacy unit, however, Cohere's allocation eventually violates the intended budget bound as it cannot differentiate between time-based requests for the same vs different months.
Consequently, Cohere cannot prevent the concentration of privacy loss on specific months, increasing privacy risks for those months.
By tracking multiple privacy units and, for the time-based unit, multiple time steps, \oursystem can exploit parallel composition in an additional dimension and achieve more nuanced privacy semantics.

\section{Conclusion}
\vspace{-3pt}
\label{sec:dpolicy:conclusion}
Managing privacy risks across multiple data releases remains a critical challenge for organizations deploying \gls{dp}.
Most deployments today treat each release in isolation, making it difficult to assess cumulative privacy risks across an organization.
While one might expect that defining a single large privacy budget could address this issue, in practice, this approach leads to excessive privacy loss parameters that quickly fail to provide meaningful guarantees.
Moreover, appropriate privacy parameters vary significantly across different contexts, requiring a more flexible and structured approach.
To address these challenges, we introduced \oursystem, a system designed to manage cumulative privacy risks by simultaneously considering multiple \gls{dp} guarantees and making traditionally implicit assumptions about scopes and contexts explicit through a high-level policy language.
\oursystem enables organizations to implement and enforce fine-grained, organization-wide privacy risk management.

\vspace{-3pt}
\section*{Acknowledgments}
\vspace{-3pt}
\noindent
We thank the reviewers for their feedback and our sponsors for their generous support, including Meta, Google, and SNSF through an Ambizione Grant No. PZ00P2\_186050.

\bibliographystyle{plain}
\interlinepenalty=10000 %
\bibliography{references, referencesplus}

\appendices %
\begin{appendices}

\section{Multiple Privacy Units in \gls{dp} Libraries}
\label{sec:appendix:units}
This section describes how the programming framework~\cite{Hay2020-opendpprog} underlying the state-of-the-art \gls{dp} libraries~\cite{OpenDP2020-whitepaper, Berghel2022-tumult} can be generalized to support multiple privacy units.
The high-level idea of this \gls{dp} programming framework is to decompose a \gls{dp} algorithm into a \emph{transformation stage} and a \emph{measurement stage}.
In the \emph{transformation stage}, a chain of stable transformations is applied to the dataset, computing the algorithm's raw result (before adding noise).
Each transformation offers a stability guarantee: if the input distance between any two datasets is bounded by $d_{in}$, then the output distance is bounded by $d_{out}$.
The chain's overall sensitivity, defined as the maximum distance between any two neighboring datasets, can then be derived by composing these stability functions.
In the \emph{measurement stage}, a single \gls{dp} mechanism is applied to introduce the noise, and the noise is calibrated according to the sensitivity of the (chain of) transformations.
For example, in the case of the Gaussian mechanism, let $\triangle_2$ denote the $\ell_2$ sensitivity of a transformation chain $T(D)$.
The mechanism $M(D) = T(D) + x$, where $x \sim \mathcal{N}(0, \sigma^2)$ satisfies \gls{adp}~\cite{Dwork2014-dpbook} for:
\begin{equation}
\label{eq:gm}
\sigma^2 = \triangle_2  \cdot  \frac{2\log{1.25 / \delta}}{\epsilon^2}
\end{equation}

A challenge for \gls{dp} algorithms is that in many datasets, the number of contributions per user can be unbounded, leading to unbounded sensitivity and, consequently, infinite noise.
To address this, a contribution-bounding transformation can be applied to retain only $k$ contributions per user~\cite{Amin2019-dpcontrbiasvar}.
Contribution bounding is frequently implemented based on a privacy ID column in each record, ensuring that only $k$ records with the same privacy ID are retained~\cite{Berghel2022-tumult}.
The current programming framework assumes only a single privacy unit, with sensitivity calculated relative to that unit.
However, we can generalize this approach and calculate the sensitivity for different privacy units, such as user-level and user-month-level.
To realize this, we introduce a privacy ID for each privacy unit and apply a contribution-bounding transformation that limits the contributions at each level.
For example, we might enforce that each user-month privacy ID is limited to $k_M$ records, while each user privacy ID is limited to $k$ records.
In the measurement stage, the noise is calibrated with respect to the sensitivity $\triangle_2$ of the main privacy unit.
By rearranging the parameter calibration for $\epsilon$ and substituting the primary unit’s sensitivity ($\triangle_2$) with the auxiliary unit’s sensitivity ($\hat{\triangle}_2$), we obtain the privacy loss parameter $\hat{\epsilon}$ for the auxiliary privacy unit.
For example, for the Gaussian Mechanism:
\begin{equation*}
\hat{\epsilon} =  \sqrt{ \hat{\triangle}_2  \cdot  \frac{2\log{(1.25 / \delta)}}{\sigma^2}}
\end{equation*}

\section{Proof of \Cref{thr:cover}}
\label{sec:appendix:coverproof}

\begin{proof}
    Assume, for contradiction, that there exist two \glspl{rule} $r_i \in R$ and $r_j \in R \setminus \{r_i\}$ such that $r_i \preceq r_j$ and $B_i \geq B_j$, but $r_i$ is constraining (i.e., not non-constraining).
    By \Cref{def:nonconstrain}, this implies that there exists a composition of mechanisms $M = (m_1, m_2, \ldots, m_N)$ that violates \gls{rule} $r_i$ but satisfies all rules in $R \setminus \{r_i\}$.
    Since $r_i \preceq r_j$, by \Cref{eq:rulerelation}, every mechanism matching the predicate of \gls{rule} $r_i$ must also match the predicate of \gls{rule} $r_j$, and the privacy unit of \gls{rule} $r_i$ is smaller or equal to that of $r_j$, i.e., $u_i \leq u_j$.
    As shown in \Cref{sec:policy:unit}, $u_i \leq u_j$ implies that privacy loss parameters in unit $u_j$ are at least as large as in unit $u_i$.
    Note that privacy parameters for all common \gls{dp} variants trivially define a partial order.

    Let $c_i$ and $c_j$ denote the cumulative privacy loss for the (composition of) mechanisms matching the predicates of \glspl{rule} $r_i$ and $r_j$, measured in units $u_i$ and $u_j$, respectively.
    Since the composition associated with $r_i$ is the same- or a sub-composition of that associated with $r_j$, privacy loss parameters are non-negative, and privacy loss parameters for unit $u_j$ are at least as large as for unit $u_i$, it follows that $c_j \geq c_i$.
    Moreover, because the composition does not violate \gls{rule} $r_j$, we have $c_j \leq B_j$.
    However, since the composition in unit $u_i$ violates \gls{rule} $r_i$, it must be that $c_i \nleq B_i$.
    Combining these, we obtain $B_j \geq c_j \geq c_i \nleq B_i$
    which implies $B_j \nleq B_i$, contradicting the assumption that $B_i \geq B_j$.
    Thus, our assumption that $r_i$ is constraining must be false, and $r_i$ is indeed non-constraining, as required.
    \end{proof}

\begin{figure}[t]
    \centering
    \includegraphics[width=0.80\columnwidth]{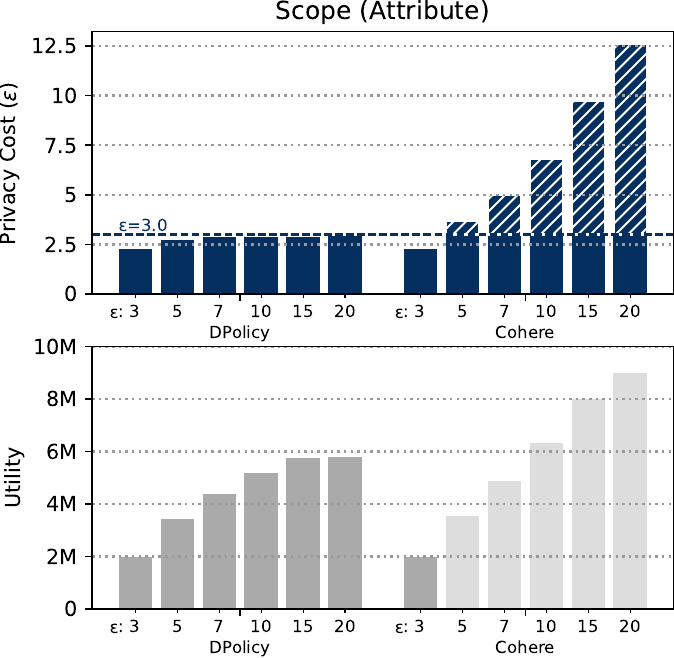}
    \vspace{5pt}

    \vspace{7.5pt}
    \caption{Privacy cost for the largest-cost high-risk attribute (top) and utility (bottom) for \oursystem and Cohere.}

    \label{apx:eval:trap:attribute}
    \vspace{8pt}
\end{figure}

\section{Workload Configuration}
\label{apx:eval:config}

We use the most complex workload \ul{W4:All} from Cohere~\cite{Kuchler2024-cohere} with the same configuration.
\Cref{tab:eval:config} provides a complete overview of all parameters.
Each request has a scalar utility representing its expected organizational value if accepted.
The workload models utility using the Cobb-Douglas production function~\cite{Cobb1928-cobbdouglas}, $Y=A \cdot L^{\beta }K^{\alpha}$, where $L$ is the privacy cost, $K$ is the amount of data, and $A \sim \text{Beta}(0.25, 0.25)$ captures application-level variations.
The parameters are fixed at $\alpha = 1$ and $\beta = 2$.
Requests are assigned one of three privacy cost levels, each with a different $\epsilon$, while $\delta = 10^{-9}$ remains fixed.
Not all requests target all users; instead they select subsets based on \glspl{pa} (e.g., only US users).
The \gls{pa} domain is represented as a vector of size 204,800.
Each request selects a consecutive range of $\lfloor s \cdot 100 \rfloor$ elements, starting at a uniformly random position and wrapping around the edges, with $s \sim \text{Beta}(a, b)$.

\newpage

\section{Attribute and Category Sampling}
\label{apx:eval:sampling}
Each request selects a set of attributes from the data schema
through a Bernoulli process with continuation probability $p_A = 0.75$ and attribute selection probabilities $[p(\sa_1),\ldots,p(\sa_{150})]$.
This selection process results in $k + 1$ attributes per request where $k \sim geometric(p_A)$, amounting to $5$ attributes per request on average.
We assume that attributes are non-uniformly selected by the requests by modeling the probability of attributes as a Zipf distribution, i.e.,
$p(\sa_i) = \frac{1}{i^{\alpha}} \quad \text{for} \quad i \in [1, 150]$
where the skewness parameter $\alpha$ equals $1$.
We assume the risk levels are equally distributed among the selection probabilities.
Each attribute in the data schema is associated with one or more categories, with membership level \textit{\gls{level:member}}, \textit{\gls{level:strong}}, or \textit{\gls{level:weak}}
that represent the attributes' dependency assumptions.
We assign attributes to categories through a second Bernoulli process with continuation probability $p_C = 0.6$ (i.e., $3.5$ categories per attribute on average) and category selection probabilities $[p(\sC_1),\ldots,p(\sC_{10})]$.
also sampled from a Zipf distribution with $\alpha=0.1$.
Each attribute is \gls{level:member} of its first sampled category, while additional category associations are classified as either
\gls{level:strong} or a \gls{level:weak} based on an equal split.

\begin{table}[t!]
    \centering
    \begin{tabular}{llll}
        \toprule
        Mechanism & Privacy Cost $(\epsilon)$ & \gls{pa} \\
        \midrule
        Gaussian Mech.~\cite{Dwork2006-dpgaussian} & $0.05, 0.2, 0.75$ & $Beta(1, 10)$ (high) \\
        Laplace Mech.~\cite{Dwork2006-originaldp} & $0.01, 0.1, 0.25$ & $Beta(1, 10)$ (high) \\
        SVT~\cite{Dwork2009-svt} & $0.01, 0.1, 0.25$ & $Beta(1,0.5)$ (mid) \\
        Rand. Response~\cite{Warner1965-randresp} & $0.01, 0.1, 0.25$ & $Beta(1,0.5)$ (mid) \\
        DP-SGD~\cite{Abadi2016-mldp} & $0.05, 0.2, 0.75$ & $Beta(2, 2)$ (low) \\
        PATE~\cite{Papernot2018-scalablepate} & $0.05, 0.2, 0.75$ & $Beta(2, 2)$ (low) \\
        \bottomrule
    \end{tabular}
    \caption{The workload consists of an equal mix of request categories, each defined by these parameters.}
    \label{tab:eval:config}
\end{table}

\section{Additional Evaluation Results}
\label{apx:eval:results}
We report the privacy cost of the high-risk attribute with the largest privacy cost for \Gls{trap} \gls{trap:scope} in~\Cref{apx:eval:trap:attribute}.

\newpage
\end{appendices}

\section{Meta-Review}

The following meta-review was prepared by the program committee for the 2025
IEEE Symposium on Security and Privacy (S\&P) as part of the review process as
detailed in the call for papers.

\subsection{Summary}
This paper introduces DPolicy, a privacy risk management framework that applies differential privacy (DP) at an organizational scale to address cumulative privacy risks.
It introduces a DP Policy Language to define privacy semantics, an optimized Policy Enforcement mechanism for scalability, and integration with existing DP systems to manage privacy budgets effectively.
By considering the scope and context of data releases, DPolicy enhances privacy risk assessment across various data-sharing scenarios, including machine learning and public data releases.

\subsection{Scientific Contributions}
\begin{itemize}
  \item Creates a New Tool to Enable Future Science
  \item Provides a Valuable Step Forward in an Established Field
\end{itemize}

\subsection{Reasons for Acceptance}
Overall, the reviewers were positive of the paper, appreciating its focus on ensuring privacy across multiple data releases with related fields and its development of a framework based on predicate logic for composing differential privacy guarantees.

\end{document}